\newtheorem{thmm}{Theorem}
\newtheorem{remm}{Remark}
\newtheorem{assm}{Assumption}
\title{\LARGE \bf
A Cyberattack Detection-Isolation Scheme For CAV \\ Under Changing Driving Environment
%Cyberattack Detection And Isolation for CAV Under Changing Driving Environment
} %Diagnostics of Cyberattacks in Connected Autonomous Vehicles Under Changing Driving Environment
\author[1]{Sanchita Ghosh}
\author[2]{Nutan Saha}
\author[1]{Tanushree Roy}
\affil[1]{Department of  Mechanical Engineering, Texas Tech University, Lubbock, TX 79409, US. Emails:~{\tt\small sancghos@ttu.edu, tanushree.roy@ttu.edu}.}
\affil[2]{ Department of Electrical Engineering, Veer Surenda Sai University of Technology, Odisha 768018, India. Emails:~{\tt\small nsaha\_ee@vssut.ac.in}. }
\begin{document}

\maketitle
\thispagestyle{empty}
\pagestyle{empty}

%%%%%%%%%%%%%%%%%%%%%%%%%%%%%%%%%%%%%%%%%%%%%%%%%%%%%%%%%%%%%%%%%%%%%%%%%%%%%%%%

\begin{abstract}
Under a changing driving environment, a Connected Autonomous Vehicle (CAV) platoon relies strongly on the acquisition of accurate traffic information from neighboring vehicles as well as reliable commands from a centralized supervisory controller through the communication network. Even though such modalities are imperative to ensure the safe and efficient driving performance of CAVs, they led to multiple security challenges. Thus, a cyberattack on this network can corrupt vehicle-to-vehicle (V2V)  and vehicle-to-infrastructure (V2I) communication, which can lead to unsafe or undesired driving scenarios. Hence, in this paper, we propose a cyberattack detection-isolation algorithm comprised of a unified V2V and V2I cyberattack detection scheme along with a V2I isolation scheme for CAVs under changing driving conditions. The proposed algorithm is constructed using a bank of residual generators with Lyapunov function-based performance guarantees, such as disturbance-to-state stability, robustness, and sensitivity. Finally, we showcase the efficacy of our proposed algorithm through extensive Monte-Carlo simulations using real-world highway and urban driving data. The results show that the proposed algorithm can enhance the cybersecurity of CAVs by detecting cyberattacks on CAV platoons and isolating infrastructure-level traffic manipulation.

\end{abstract}

\section{Introduction}
CAVs  are gradually becoming an essential part of modern transportation by providing improved efficiency, safety, and energy sustainability \cite{litman2017autonomous,dey2015review}. On the other hand, Vehicular Ad-hoc NETwork (VANET) is one of the most explored technologies that can support these V2X communications (vehicle-to-everything i.e. both V2V and V2I) along with several facilities, and protocols to enable reliable operation of these Intelligent Transportation Systems (ITS) under changing traffic environments \cite{usdata,wyk2020} and can transmit local information including emergency warnings \cite{vanet}. While the incorporation of this safety-critical information is crucial for effective platoon control, this also enhances the potential for cyberattacks on these networks \cite{buinevich2019forecasting, survey}. Thus, the task of detecting these cyberthreats in CAVs is crucial and such detection is especially challenging when the mode of operations is changing \cite{sanchita_CCTA2023security}.

Most importantly, VANETS are utilized by vehicle platoons with cooperative adaptive cruise control (CACC) to ensure string stability. Here string stability refers to the reduction of traffic density disturbances during its upstream propagation \cite{swaroop1998intelligent}. To achieve this, each vehicle in the platoon relies on the V2X communication and on its onboard controller. The latter receives data from the preceding vehicle (V2V communication) and from the infrastructure (V2I communication) through a wireless communication network. Using this data along with the onboard sensor measurements (e.g. LiDAR, RADAR), it changes the vehicle's velocity to maintain a desired inter-vehicle distance, namely headway. The V2V communication in CACC enables the vehicle platoon to maintain a smaller headway among each other and thereby increasing the traffic flow \cite{xu2003simulation}. Now, these optimum headways and vehicle velocities are not fixed for different driving or traffic environments. For example, generally on the interstates the platoon headway can be kept smaller with higher velocities since traffic is mostly uninterrupted. Such strategies are used in truck platooning to increase fuel efficiency \cite{balador2022survey}. Conversely, for urban traffic, the headway is mostly larger with slower traffic speeds to avoid collisions during frequent stop-and-go conditions. Thus, such vehicle platoons must operate under different driving ``modes" with changing driving environments \cite{song2018multi}.

Now, the goal for this paper is to ensure safety and string stability while accommodating these changes in the driving environment of the platoon. In particular, the V2I communication facilitates adapting to changing traffic conditions with updated information from global traffic data. Various methods such as machine learning-based classifiers, and dynamic programming are used to determine these driving modes via driving pattern recognition \cite{lin2004driving,song2018multi,driving}.  To ensure the string stability of the platoon and optimal energy consumption, different controller gains are designed for the vehicles under these different driving conditions \cite{jeon2002multi}. The traffic infrastructure uses the V2I network to communicate with the onboard vehicular controllers to switch between these different gains or driving modes. In practice, the infrastructure uses  Global Navigation Satellite System (GNSS) or Visual Positioning Systems (VPS) based data to obtain the trajectory of each vehicle and also  to ascertain the changing driving conditions \cite{song2018multi,driving}. We note here that various other modalities such as traffic reporting services, and social data \cite{tweet,chen2010review} can be also used to ascertain such changes in driving conditions. These redundancies are generally practiced to improve traffic safety \cite{yan2007providing}.

Next, to understand the vulnerabilities of VANETS to cyberattacks, we note the structure of a typical VANET  consists of a Road-Side Unit (RSU) and an On-Board Unit (OBU).  These RSUs are installed on the roadside infrastructure and are responsible for providing local information and actuation signals to vehicles. These can be physically accessed by adversary employees during maintenance to get physical data and tamper the hardware \cite{hasrouny2017vanet}.
Meanwhile, OBUs are installed on vehicles  to receive the sensor data and the transmitted information from RSUs through wireless devices.  Moreover, VANET can be configured to change communication topology between communicating vehicles as well as with the local RSUs \cite{hasrouny2017vanet,sdn}. This further enhances the susceptibility of platoons to unsafe or undesirable operations.  Occasionally, under VANET, a Dedicated Short Range Communication (DSRC) protocol is used and the U.S. Federal Communications Commission has allocated 75 MHz of licensed spectrum in the 5.9 GHz band for ITS \cite{dsrc}. Even though this short-range communication improves VANET security against long-range and stationary attackers, yet it fails to secure the vehicular network completely \cite{bai2010toward}.

Hence, in this paper, we consider the possibility of cyberattacks through  the compromised V2V and V2I wireless networks.  While  attacks on both of these networks can lead to unsafe or undesirable scenarios such as rear-end collisions \cite{wang2020modeling}, CACC disengagement \cite{gunter2020commercially}, traffic congestion \cite{engoulou2014vanet}, etc., the V2I communication safety and privacy are remarkably more significant since every vehicle discloses its identity and tracking data \cite{al2016security}. Consequently, while an  attack on the V2V channels explicitly impacts individual or adjacent vehicles in the platoon, an attack on the V2I network compromises  the communications between all vehicles and critical infrastructure components (such as ramp metering, RSUs, etc.) and hence poses a potential threat for a wide-spread disruption in the whole traffic system \cite{roy2021cyber}. For example, the adversary can use the V2I network to send tailor-made signals to multiple CAV platoons to create scenarios like bottleneck traffic cognition, stop-go traffic, and so on \cite{reilly2016creating}. Moreover, (under a changing driving environment scenario) a cyberattack on the V2I channels mimics a compromised switching attack scenario and such an attack exhibits a high likelihood of leading the system to an unsafe operating region \cite{ghosh2023security}.  Thus, it is important to isolate the attacks on the V2I channel, if possible \cite{islam2018cybersecurity}. To address these threats, we categorize cyberattacks as  V2V and V2I attacks \cite{v2vattack,boddupalli2022resilient,mashrur,islam2018cybersecurity}. Additionally, the V2X attack denotes either concurrent  or asynchronous V2V-V2I attacks \cite{han2023secure}.  Thus we propose a unified diagnosis scheme to first detect the presence of V2V and/or V2I attacks and then, an isolation scheme to distinguish V2I attack scenarios. 
% This isolation scheme essentially identifies the first V2I-compromised vehicle in a platoon, since a single compromised vehicle can affect the stability and safety of all the succeeding vehicles in the platoon  \cite{dadras2015vehicular}.

\section{Related Literature And Contribution}

Existing CAV security and cyberattack detection can be broadly classified into (i) information-based, (ii) machine learning-based, and (iii) model-based approaches. In the information-based safety analysis, authors focused mainly on secured networks and data transmission \cite{kalinin2017network,bai2010toward,dataencrypt} along with authentication and plausibility confirmation of data received through vehicular networks \cite{xiao2006detection,trustdata,mundhenk2017security}. 
Similarly, in \cite{dataencrypt}, authors proposed a data-encryption scheme to achieve reliable vehicle platoon operation. To enhance security while ensuring flexibility and programmability of the network, several works focused on software-defined network (SDN) \cite{vanet,sdn, yan2015software}. Separating the centralized control panel from the data panel and the dynamic updating  makes SDN more secure against cyberthreats \cite{yan2015software}. However, when these  informatics-based security measures are breached, the adversary poses severe risks to the vehicle platoon by gaining access to physical control \cite{mundhenk2017security}.

Furthermore, several works focus on machine learning-based techniques to detect cyberattacks on CAVs. For example, a convoluted neural network (CNN) based detector is proposed in \cite{javed2020cnn} to detect multi-source anomalies as well as single-source anomalies in sensor behavior. Thus the proposed solution is not restrictive to the detection in  specific scenarios only. Likewise, the authors proposed a reinforced learning-based data trust model to improve data authenticity in \cite{rmltrust}.  In \cite{mashrur}, authors utilized expectation maximization (EM) and cumulative summation (CUSUM) algorithms to ensure real-time detection of V2I cyberattacks. Additionally, in \cite{boddupalli2022resilient}, different driving environments are considered as a parameter during ML-based model training for the V2V attack detection strategy. However, these approaches rely heavily on the availability of data \cite{wyk2020} and it is particularly difficult to train such classifier-based detectors when data generated by CAVs is limited and incomplete \cite{bangui2022hybrid}. With increasing connectivity among smart infrastructures, the potential  attack vectors increase proportionally, which in turn increases the occurrence of novel forms of attack on CAVs. Thus, limited data severely handicaps machine learning-based strategies \cite{bangui2022hybrid}, particularly in the context of new or unknown attacks \cite{close}.

In contrast, model-based approaches employ state estimation of the vehicle platoon using mathematical models of CAVs, to detect any deviation in the behavior of the CAVs from the nominal operation and consequently detect cyberattacks \cite{stateestimation}. For instance,  \cite{kremer2020kf} proposed a Kalman filer-based (KF) detection scheme for an ego vehicle under a compromised controller by utilizing the measurements from  neighboring vehicles. Likewise, an extended KF observer is used to detect cyberattacks under communication delay in \cite{kf}. Additionally, adaptive resilient observer-based distributed detectors are used to isolate between compromised and healthy vehicle sensors \cite{he2021distributed}. A per-vehicle detection scheme consisting of two ellipsoidal sets, namely a state prediction set and a state estimation set, is proposed in \cite{estimation}. This method ensures the detection of both V2V and V2I attacks. Moreover, along with the system model and control-based tools, authors utilized  the redundancies among the physical signals and the social signals received from users' mobile and social media to detect cyberattacks on connected transportation systems in \cite{roy2021socio}. Similarly, sensor information redundancies were exploited to isolate the compromised vehicles under Syb attacks in \cite{yang2021secure}. \cite{biron2018real} proposed a detection algorithm scheme for Denial-of-Service (DoS) attacks on the communication network by means of sliding mode and adaptive observers for individual vehicles and estimated the impact of the attack on CAV. Finally, an integrated framework combining CNN and KF was proposed in \cite{wyk2020} to detect anomalous sensors for CAV platoons.

% while utilizing the strengths of each approach. 
% Existing CAV safety and cyberattack detection can be broadly classified into (i) observer-based \cite{wang2020observer,estimation,biron2018real}, (ii) informatics based \cite{kalinin2017network,yasir2020cyber}, (iii) Machine Learning- convoluted neural network (CNN), classifiers, deep learning \cite{he2020machine,javed2020cnn,kang2016ml}, (iv) entropy-based \cite{entropy,entropy1}, (v) Kalman filter based \cite{kremer2020kf,kf} etc. In \cite{wyk2020}, authors proposed a kalman filtering and CNN combined approach to detect cyberattack on CAVs. Distributed detector system with a time-varying threshold has been proposed in \cite{he2021distributed} to identify the compromised and the healthy vehicle sensors. For detailed literature regarding cyberthreats in CAVs and potential countermeasures, readers are referred to \cite{survey} and references within. Authors utilized social data from users' mobile filtered with Recurrent Neural Network (RNN) to detect cyberattacks on traffic infrastructure in \cite{roy2021cyber}.

Nevertheless, these works considered a single control law for all driving conditions, although multi-mode control strategy in CAVs is essential to ensure optimal operation such as energy efficiency, vehicle safety, etc. under changing driving environments \cite{song2018multi,zhai2018switched}.   Moreover, these works did not consider isolating V2I attack scenarios,
% while detecting. A rule-based isolation scheme was proposed to identify the source of attack due to sensor redundancy \cite{close}, however, the dilemma of V2V and V2I was not considered. 
where the latter introduces a variety of vulnerabilities, including exposed infrastructure data access and large-scale resource disruption \cite{ roy2021cyber}.  
Furthermore, identification of single or multiple compromised vehicles is often unaddressed in many of these works, which can enable the authorities to respond with proper countermeasures knowing the source and nature of the attacks.
 Therefore, to address these research gaps, in this paper our contributions are as follows: 
\begin{itemize}
    \item We propose a detection-isolation algorithm based on residual generation to detect both V2V and V2I (i.\,e. V2X) attacks in a CAV platoon traveling under different modes depending on the changing driving environment. 
    \item The proposed algorithm can also isolate the attacks on the V2I channel without restrictions on the attack policies.
    \item The proposed algorithm can identify the compromised vehicles and thus track down the source of the attack.
    \item The efficacy of the algorithm is tested using realistic driving patterns from \href{https://ops.fhwa.dot.gov/trafficanalysistools/ngsim.htm}{NGSIM} (Next-Generation Simulation)  and \href{https://www.epa.gov/emission-standards-reference-guide/epa-urban-dynamometer-driving-schedule-udds}{UDDS} (Urban Dynamometer Drive Schedules) data.   
\end{itemize}

% In this paper, we utilized the mathematical representation of the cyberattack as an additive anomaly. In \cite{biron2018real}, authors  utilized mathematical modeling, however, the paper focused on Denial of Service (DoS) attack on the V2I communication  channel only and considered single operating mode for the vehicle platoon.

The rest of the paper is organized as follows. In Section \ref{2}, we present the problem formulation including the closed-loop system model of the CAV under consideration. Section \ref{3} presents the proposed V2X detection scheme and mathematical analysis to obtain necessary and sufficient conditions for performance guarantees. Next, a V2I isolation scheme is presented in Section  \ref{ie}. The simulation results showing the effectiveness of our proposed algorithm to detect V2X attacks as well as to isolate V2I attacks are presented in Section \ref{4}. Finally, we present the concluding remarks  in Section  \ref{5}.

\textbf{Notation:}  In this paper, we used $\mathbb{R}$ to denote the set of natural numbers. The Euclidean norm of a vector ${\lVert x\rVert}_2=\sqrt{x^Tx}$, andthe $\mathcal{L}_2$ norm of a vector-valued function ${\lVert x\rVert}_{\mathcal{L}_2}:=\sqrt{\left(\int_0^\infty {\lVert x\rVert}^2_2 dt\right)}$.  The absolute value of the variable $a$ is denoted as  ${\lvert a\rvert}$. We used $I_a$ to denote an identity matrix. To denote the minimum and maximum eigenvalue of a matrix A, we used $\lambda_{\min[A]}$ and $\lambda_{\max[A]}$ respectively.  $\Dot{f}$ denotes derivative with respect to $t$ i.e. $\dot{f} = \frac{df}{dt}$. For two vectors $x$, $y$, and a matrix $A$, we define a binary quadratic function $q(x,A,y) = x^T A y + y^T A^T x$.
% We used LHS and RHS to denote the left-hand-side and right-hand-side respectively.

\section{Problem Formulation} \label{2}
Let us consider a vehicle platoon under CACC with V2V and V2I communication. The vehicle platoon runs with different operating modes  to cope with changing driving environments. The leader vehicle drives in response to the surrounding driving environment. We note here that the leader vehicle is either human-driven or semi-autonomous which is auspicious in practical implementation \cite{yao2020managing}. A supervisory controller monitors the trajectory, position, and acceleration of the leader vehicle to decide upon the appropriate operational  mode for the rest of the vehicle platoon \cite{turri2016cooperative}.
% Here string stability refers to the reduction of traffic density disturbances during its upstream propagation \cite{huandra1998intelligent}. To achieve this, each vehicle in the platoon relies on the V2X (i.e. both V2V and V2I) communication and on its onboard controller. The latter receives data from the preceding vehicle (V2V communication) and from the  infrastructure (V2I communication) through a wireless communication network. Using this data along with the onboard sensor measurements (e.g. LiDAR, radar), it changes the vehicle's velocity to maintain a desired inter-vehicle distance, namely headway. The goal here is to ensure the safety and string stability while accommodating the changes in the driving environment of the platoon. The V2V communication in CACC enables the vehicle platoon to maintain a smaller headway among each other and thereby increasing the traffic flow \cite{xu2003simulation}. On the other hand, the V2I facilitates in adapting to changing traffic conditions with updated information from global traffic data. We  here that different controller gains are designed to ensure the string stability of the platoon under different driving conditions. Then the traffic infrastructure uses the V2I network to
It communicates this information to the onboard vehicular controllers of the follower vehicles to switch among different ``modes".  We note here that each mode corresponds to specific controller gains. These different gains along with the V2V and V2I data ensure the string stability of the platoon. In this paper, we assume that the infrastructure uses GNSS-based data to obtain the trajectory and position data of each vehicle. \mbox{Fig. \ref{fig:blockdiagram}} shows the block diagram of our problem framework.

 %identifying the first compromised vehicle is important to ensure the safety of the vehicle platoon.

% In absence of wireless communication, the control turns into adaptive cruise control (ACC) in which only the onboard sensors data, e.g. lidar data, radar data, etc is available regarding preceding vehicle position and velocity.  Moreover,  Therefore, uninterrupted wireless communication is very crucial to string stability.

\begin{figure}[ht!]
    \centering
    \includegraphics[trim = 0mm 0mm 0mm 0mm, clip,  width=0.9\linewidth]{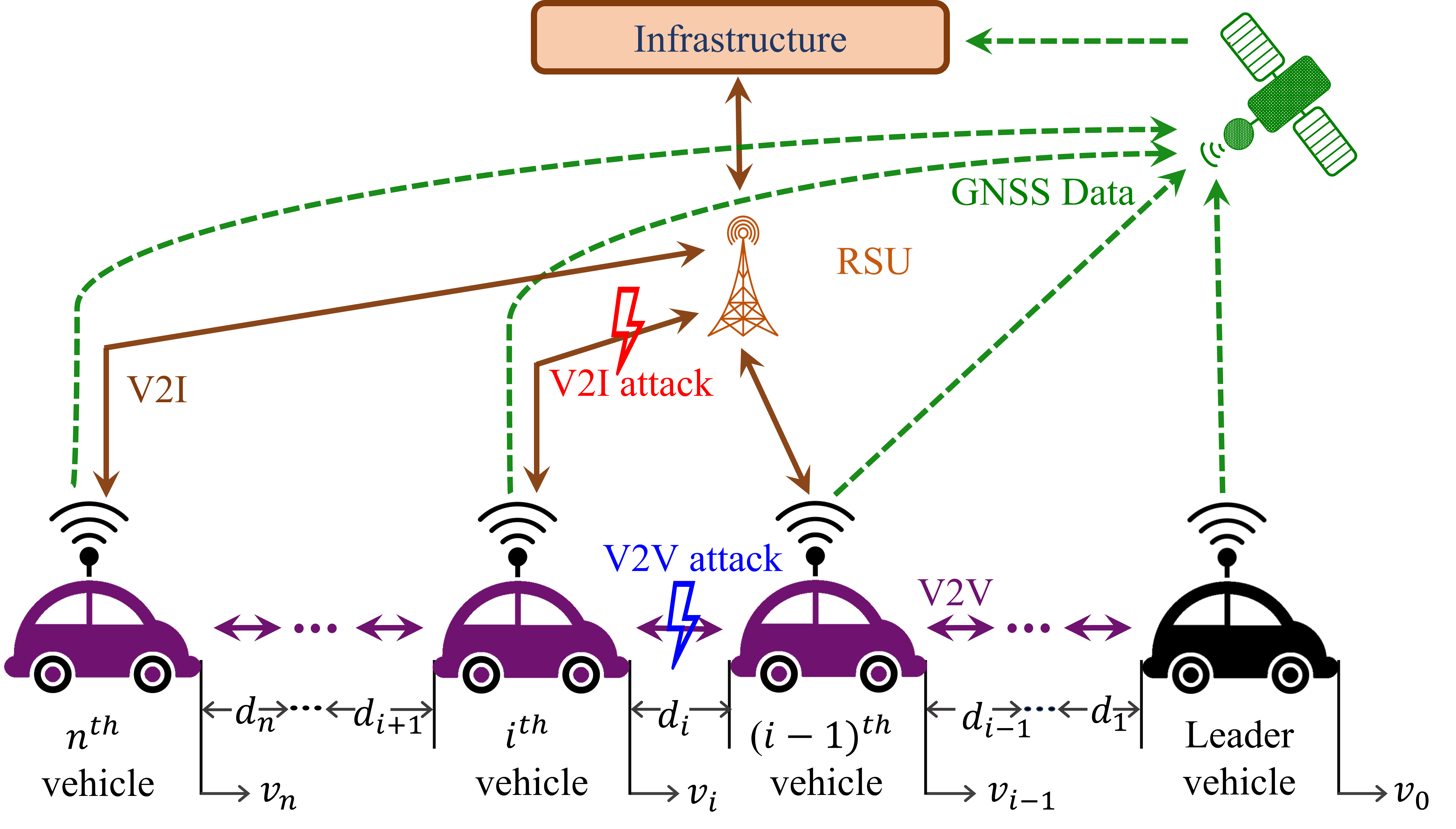}   %%XXX.png/eps 
    \caption{Block diagram of the vehicle platoon.}
    \label{fig:blockdiagram}
\end{figure}

\begin{assm}
    In this problem setting, we assume that the traffic measurements from the GNSS or VPS are not simultaneously compromised along with the communication network. Additionally, under nominal conditions, we are assuming a delay-free communication network. Generally, consideration of delays in traffic measurements results in a relatively slower detection time \cite{roy2020secure} and will be considered in our future work.
\end{assm}
\begin{remm}
     The human-driven or semi-autonomous leader vehicle of the CAV platoon is not susceptible to cyberattacks. Yet, if the leader vehicle is a CAV, we will still be able to extend our framework and detect-isolate cyberattacks on the leader vehicle (in the near future), as long as we can collect the positional data of the vehicle preceding the leader. In the unlikely event that such information is unavailable, our algorithm in its present form will be unable to detect the cyberattack on the leader and such scenarios remain beyond the scope of our current work. 
\end{remm}
\noindent
%In the rest of this section, we will present both the nominal and compromised closed-loop model of vehicular dynamics along with the cyberattack policy.

\subsection{Closed-loop vehicle dynamics}
We consider a car following (CF) vehicle platoon with a string of $n + 1$ vehicles. First, let us define the state vector for the leader vehicle as $\overline{z}_0 = \begin{bmatrix}
    x_0 & v_0
\end{bmatrix}^T$ where $x_0, v_0$ are respectively the position and velocity of the leader vehicle. The vehicle dynamics of the leader vehicle can be written as
   $ \dot{\overline{z}}_0 = A_0 \overline{z}_0 + B_0 u_0,\, t\geqslant 0,$
 Here the state matrix $A_0 \in \mathbb{R}^{2 \times 2}$ is defined as $A_0 = \begin{bsmallmatrix}
    0 & 1 \\
    0 & 0
\end{bsmallmatrix}$. The input matrix  $B_0 \in \mathbb{R}^{2 \times 1}$ is defined as $B_0 = \begin{bsmallmatrix}
    0 & 1 
\end{bsmallmatrix}^T$. The input $u_0 \in \mathbb{R}^1$ is defined as $u_0 = a_0$. $a_0$ is the acceleration for the leader vehicle. Let $i\in \{0, 1, \cdots, n\}$ represent the vehicle number in the  platoon where $i = 0$ represents the leader vehicle. The rest $n$ numbers of follower vehicles are expected to follow their preceding vehicle at a desired headway $h_{d_i}$, which is defined as 
$
    h_{d_i} = s_i + \mathcal{T}_\alpha v_i, \quad 1 \leqslant i \leqslant n,
$
where $s_i$ is the standstill distance for $i^{th}$ vehicle and $\mathcal{T}_\alpha$ is the time headway during $\alpha^{th}$ operating mode. $v_i$ is the velocity of the $i^{th}$ vehicle. Here the piece-wise constant switching signal $\alpha : [0,\infty) \rightarrow \mathcal{M} = [1, 2, \cdots, m]$  from infrastructure indicates the  operational mode of the system. The platoon is assumed to be homogeneous and therefore the time headway $\mathcal{T}_\alpha$ is not dependent on the vehicle $i$. The actual headway  between vehicle $i$ and the preceding vehicle $i-1$ is denoted as $h_i$. 
\noindent
Now, let us define state vector $z_i = \begin{bmatrix} h_i & v_i & a_i \end{bmatrix}^T$, where $a_i$ is the acceleration of the $i^{th}$ vehicle. We then re-write the state vector of the leader vehicle as $z_0 = \begin{bmatrix}
    0 & v_0 & a_0
\end{bmatrix}^T$. Then  the closed-loop vehicle dynamics of the $i^{th}$ vehicle becomes 
\begin{align}
    &\dot{z}_i = A z_i + D z_{i-1} + B u_i + \overline{w}_i, \hspace{1mm} y_i = C z_i, \hspace{1mm} 1\leqslant i \leqslant n \label{xi} \\
    &\dot{u}_i = - k_{u_\alpha} u_i + A_{c_\alpha} z_i + D_{c_\alpha} z_{i-1} - H_{\alpha} s_i + \frac{\theta_i}{\mathcal{T}_\alpha} u_{i-1}. \label{ui}
\end{align}
 Here $A, D \in \mathbb{R}^{3\times3}, B \in \mathbb{R}^{3\times 1}, A_{c_\alpha}, D_{c_\alpha} \in \mathbb{R}^{1\times3}$, and  $ H_{\alpha} \in \mathbb{R}^{1}$ are considered identical for all vehicles following the homogeneous platoon assumption and are defined as 
\begin{align} 
    &A = \begin{bmatrix}
        0 & -1 & 0 \\
        0 & 0 & 1 \\
        0 & 0 & -\frac{1}{\Sigma}
    \end{bmatrix},\,
    A_{c_\alpha} =\frac{k_\alpha}{\mathcal{T}_\alpha} \begin{bmatrix}
        1 & -\mathcal{T}_\alpha & 0\\
        0 & -1 & -\mathcal{T}_\alpha \\
        0 & 0 & \frac{\mathcal{T}_\alpha}{\Sigma} - 1
    \end{bmatrix},\nonumber \\
     & D = \begin{bmatrix}
        0 & 1 & 0 \\
        \textbf{0} & \textbf{0} & \textbf{0} 
    \end{bmatrix},\,
    D_{c_\alpha} = \frac{k_\alpha}{\mathcal{T}_\alpha} \begin{bmatrix}
        0 & \textbf{0}  \\
        \textbf{0} & I 
    \end{bmatrix}, \nonumber \\
    &B = \begin{bmatrix}
        0 & 0  &  \frac{1}{\Sigma}
    \end{bmatrix}^T, H_{\alpha} = \frac{k_\alpha}{\mathcal{T}_\alpha}\begin{bmatrix}
        1 & 0 & 0
    \end{bmatrix}^T.     
\end{align} 
\noindent The parameter $\Sigma$ is the time constant to represent the engine dynamics of the vehicle, $y_i \in \mathbb{R}^q (q \leqslant 3)$ is the output and $u_i \in \mathbb{R}^1$ is the controlled input. $C \in \mathbb{R}^{q\times 3}$ is the output matrix. $\overline{w}_i \in \mathbb{R}^3$ is the uncertainty.  $k_\alpha \in \mathbb{R}^{1 \times 3}$ is the controller gain vector defined as $ k_\alpha = \begin{bmatrix}
    k_{p_\alpha} & k_{d_\alpha} & k_{dd_\alpha}
\end{bmatrix}$ and \mbox{$k_{u_\alpha} = \frac{1}{\mathcal{T}_\alpha}  + \frac{k_{dd_\alpha}}{\Sigma} $}. The function $\theta_i : [0,\infty) \rightarrow \{ 0, 1 \}$ represents the  V2V communication where the values $0$ and $1$ indicate ceased or  intact communication respectively.

\noindent
To analyze the closed-loop vehicle dynamics, let us define an augmented state vector $\zeta_i = \begin{bmatrix}
    z_i^T & u_i
\end{bmatrix}^T$ which yields
\begin{equation}\label{single_aug}
    \dot{\zeta}_i = \mathcal{A}_\alpha \zeta_i + \mathcal{D}_{\theta \alpha} \zeta_{i-1} - \mathcal{H}_{\alpha} s_i + w_i, \hspace{3mm} y_i = \mathcal{C} \zeta_i,
\end{equation}
where  $\mathcal{A}_\alpha,\mathcal{D}_{\theta \alpha} \in \mathbb{R}^{4\times4}$,  $\mathcal{C} \in \mathbb{R}^{q \times 4}$, and $\mathcal{H}_{\alpha}, w_i \in \mathbb{R}^{4\times 1}$ are defined as\begin{align}
    \mathcal{A}_\alpha =& \begin{bmatrix}
        A & B \\
        A_{c_\alpha} & -k_{u_\alpha}
    \end{bmatrix}, \mathcal{D}_{\theta \alpha} = \begin{bmatrix}
        D & \textbf{0}\\ D_{c_\alpha} & \frac{\theta_i}{\mathcal{T}_\alpha}
    \end{bmatrix}, \mathcal{C} = \begin{bmatrix}
    C^T \\ \textbf{0}
\end{bmatrix}^T,\nonumber \\ &\mathcal{H}_{\alpha} = \begin{bmatrix}
        \textbf{0} & H_{\alpha}
    \end{bmatrix}^T, w_i = \begin{bmatrix}
    I & \textbf{0}
\end{bmatrix}^T \overline{w}_i. 
\end{align}

\begin{assm}
The switching between the operational modes of the CAV platoon is assumed to be significantly slow since the driving environment changes infrequently. In particular, we assume that such switching time is more than the average dwell time (ADT) \cite{hespanha1999stability} and the platoon/vehicles remain stable under the designed controller gains for each mode. 
\end{assm}

\subsection{Cyberattack policy and compromised platoon dynamics}

\begin{figure}[h!]
    \centering
    \includegraphics[trim = 0mm 0mm 0mm 0mm, clip,  width=0.75\linewidth]{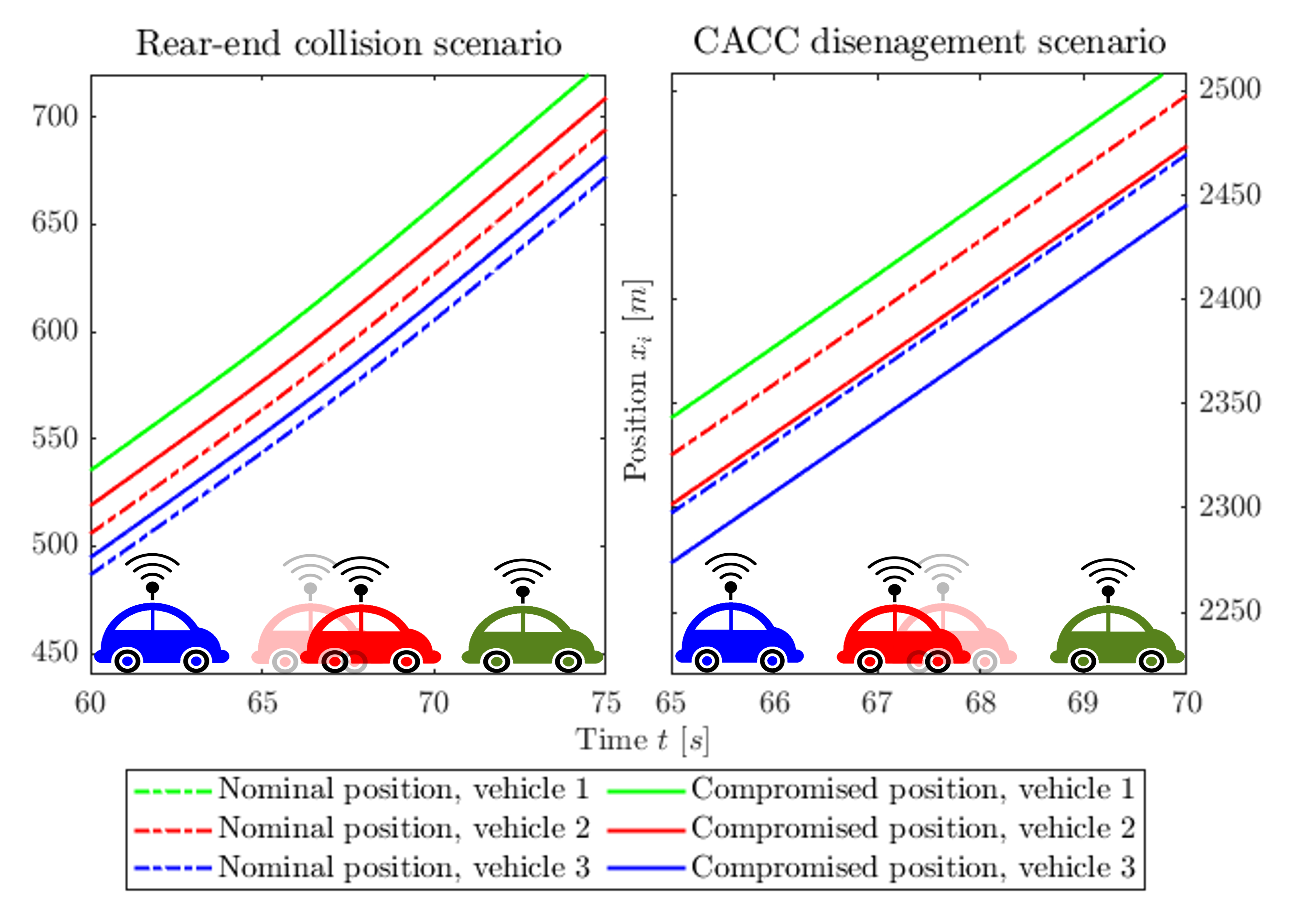}   %%XXX.png/eps 
    \caption{Objectives of the adversary.}
    \label{fig:combined}
\end{figure}
In this framework, we consider that the adversary can simultaneously attack single or multiple follower vehicles of the platoon either through the V2V or through the V2I communication channel. An attack on both the V2V and the V2I communication (i.\,e. V2X) is possible as well. Here the primary objective of the adversary is to drive the platoon to unsafe or undesired conditions. In unsafe conditions, vehicles may run too close to each other (too small a headway distance) during the stop-go traffic condition. This can lead to  a high probability of collisions as shown in the left plot of \mbox{Fig. \ref{fig:combined}}. On the other hand, a compromised vehicle may trail off if the headway distance is too large and the velocity is too small compared to the preceding vehicle, leading to disengaged CACC \cite{gunter2020commercially}. The right plot of \mbox{Fig. \ref{fig:combined}} exhibits this scenario. 

The adversary can opt for  several active network attack policies such as DoS,  distributed DoS (DDoS), blackhole, false-data-injection (FDI), and replay attacks to manipulate the platoon control policy and/or measurement \eqref{single_aug}. They particularly affect the control input dynamics in \eqref{ui} \cite{duo2022survey}. Such attacks can be formulated with the switching command signal $\alpha_i$ and the V2V communication function  $\theta_i $ along with the preceding vehicle's control input $u_{i-1}$ to mathematically represent these different attack policies to better understand their impacts in the context of our framework. Furthermore,  the compromised scenarios are denoted by $\Tilde{\alpha}_i$ and $\Tilde{\theta}_i \Tilde{u}_{i-1}$. We define two Boolean scalars ${\delta_\alpha}$ and ${\delta_\theta}$ such that  ${\delta_\alpha},{\delta_\theta} = 1$ indicate the activation of  V2V and V2I attacks,  respectively.

\textbf{DoS / DDoS Attack:} During DoS attacks,  the compromised vehicle  stops receiving any updates from the infrastructure or\,/\, and the preceding vehicle and hence,  continues to run with the previous information \cite{hasrouny2017vanet}.   Mathematically we define this scenario for the $i^{th}$ vehicle as 
\begin{align}
   &\text{V2V attack: } \Tilde{\theta}_{i} \Tilde{u}_{i-1}= \theta_i \left( u_{i-1} - {\delta_\theta} \left[ u_{i-1} - u_{i-1} (t_0) \right] \right), \\
   &\text{V2I attack: } \Tilde{\alpha}_i = \alpha_i - {\delta_\alpha} \left[ \alpha_i - \alpha_i (t_0) \right], 
\end{align} where  the attacks begin at time $t_0$.
Multiple vehicles are corrupted during the DDoS attack, i.\,e.  these V2X communications can be halted  for more than one vehicle.

\textbf{False Data Injection (FDI):} In this scenario, the adversary provides the compromised vehicle/s with the information to switch to a wrong controller mode during a V2I attack. Conversely, during a V2V attack the adversary provides wrong $u_{i-1}$ information. Hence, we define  this scenario as follows.
\begin{align}
    & \text{V2V attack: }\Tilde{\theta}_{i} \Tilde{u}_{i-1} = \theta_{i} \left( u_{i-1} - {\delta_\theta} \left[ u_{i-1} - \overline{u}_{i-1} \right] \right), \\
    &\text{V2I attack: }\Tilde{\alpha}_i = \alpha_i - {\delta_\alpha} \left[ \alpha_i - \overline{\alpha}_i \right]; \,\overline{\alpha}_i \in \mathcal{M} \smallsetminus \{\alpha_i\}.
\end{align} 
An example of FDI  is the  Blackhole attack where the compromised vehicle ceases any upstream information flow \cite{chowdhury2020security}.

\textbf{Replay Attack:} This is a two-phase attack. First, the adversary stores the switching command signals $\alpha$ or\,/\,and the preceding vehicle's control input $u_{i-1}$ for a time interval $\left[t_0-\Delta \tau, t_0\right]$. 
Then in the next phase, the adversary starts to replay the stored commands % $S_\alpha$ or\,/\,and $S_\theta$ 
in a loop. Thus, for $t_a = t - \Delta \tau$, the final attack can be represented as
\begin{align}
    & \text{V2V attack: }\Tilde{\theta}_{i} \Tilde{u}_{i-1}\!= \theta_{i} \left( u_{i-1} \!- {\delta_\theta} \left[ u_{i-1} \!- u_{i-1} ( {t}_a) \right] \right)\!,  \\
    &\text{V2I attack: }\Tilde{\alpha}_i = \alpha_i - {\delta_\alpha} \left[ \alpha_i - \alpha_i ( t_a) \right].
\end{align}

\noindent
Hence, we define the V2V and V2I attacks concisely as
\begin{align}
    &\Tilde{\theta}_i \Tilde{u}_{i-1}= \theta_i u_{i-1} - \Delta_\theta; \quad \Tilde{\theta}_i : [0,\infty) \rightarrow \{0,1\}, \\
    &\widetilde{\alpha}_i = \alpha_i - \Delta_\alpha; \quad \widetilde{\alpha}_i : [0,\infty)\rightarrow \mathcal{M} = \{1,2,\cdots,m\}. \label{v2i_tilde}
\end{align} Here, $\Delta_\alpha, \Delta_\theta$ are defined such that they indicate the presence of any cyberattacks when $\delta_\alpha, \delta_\theta \neq 0$.
Now, comparing \eqref{v2i_tilde} with standard compromised switching attack definition \cite{ghosh2023security}, we can deduce that the V2I cyberattacks are in fact compromised switching attacks for the platoon, and thus, the compromised vehicles run with mismatched controllers.  On the other hand, during a V2V attack the compromised vehicles run with the correct controller, however track a wrong desired acceleration. Accordingly, the V2I attacks appear as multiplicative anomalies in the closed-loop vehicle dynamics in terms of $\mathcal{A}_{\widetilde{\alpha}},\mathcal{D}_{{\theta} \widetilde{\alpha}}, $  $\mathcal{H}_{\widetilde{\alpha}}$ and the V2V attacks appear as additive anomalies as $\mathcal{D}_{\Tilde{\theta} {\alpha}} \Tilde{u}_{i-1} $.

Thus, we utilize the standard methods   \cite{dingbook} to convert the multiplicative anomalies to additive anomalies. However, the three possible cases namely only V2V, only V2I, and simultaneous V2V-V2I attacks affect the platoon dynamics differently. Therefore,  the compromised platoon dynamics is  re-written as 

\begin{align} \label{zeta}
     \dot{\zeta}_i& = \mathcal{A}_\alpha \zeta_i + \mathcal{D}_{\theta \alpha} \zeta_{i-1} - \mathcal{H}_{\alpha} s_i + E_{\theta i} f_{\theta i} + E_{\alpha i} f_{\alpha i}  \nonumber \\ & + E_{\theta \alpha i} f_{\theta \alpha i} + w_i, \quad \zeta_i(0)=\zeta_{i0}, \quad t\geqslant 0.
\end{align}
 $ E_{\theta i}, E_{\alpha i}, E_{\theta \alpha i}$ are the known matrices and $f_{\theta i}, f_{\alpha i}, f_{\theta \alpha i}$ are the unknown time functions representing respectively only V2V, only V2I,  simultaneous V2V-V2I attacks such that
$\mathcal{A}_{\widetilde{\alpha}} \zeta_i + \mathcal{D}_{\Tilde{\theta} \widetilde{\alpha}}\Tilde{\zeta}_{i-1} - \mathcal{H}_{\widetilde{\alpha}} s_i  = \mathcal{A}_\alpha \zeta_i + \mathcal{D}_{\theta \alpha} \zeta_{i-1} - \mathcal{H}_{\alpha} s_i + E_{\theta i} f_{\theta i}  + E_{\alpha i} f_{\alpha i} + E_{\theta \alpha i} f_{\theta \alpha i} .$ In other words $\Delta_\theta \neq 0$  implies $f_{\theta i} \neq 0$, $\Delta_\alpha \neq 0$ implies $f_{\alpha i} \neq 0$, and  $\Delta_\theta, \Delta_\alpha \neq 0$ implies $f_{\theta i}, f_{\alpha i}, f_{\theta \alpha i} \neq 0$. Considering  the closed-loop  vehicle platoon dynamics \eqref{zeta}, we now introduce our  proposed algorithm. 

 \subsection{Cyberattack detection-isolation algorithm}
 \begin{figure}[h!]
    \centering
    \includegraphics[trim = 0mm 0mm 0mm 0mm, clip,  width=0.9\linewidth]{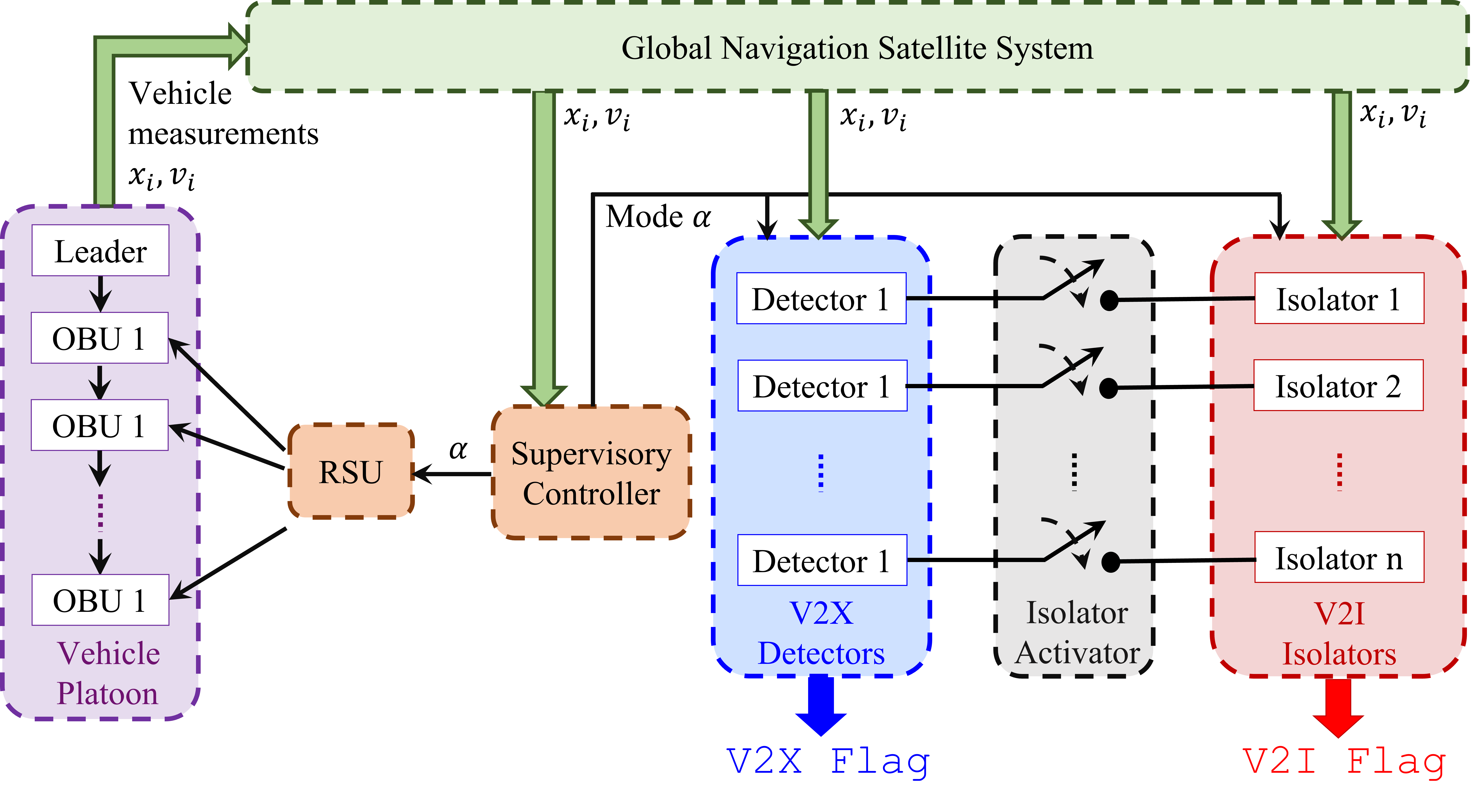}   %%XXX.png/eps 
    \caption{Schematic block diagram of the detection-isolation algorithm.}
    \label{fig:long_dia}
\end{figure}
The objective of our proposed two-phase algorithm is to detect the presence of any V2X cyberattacks and thereafter isolate the presence of V2I attacks. The first phase is the V2X detection scheme (DS) which monitors individual vehicle performance for any deviation from the desired behaviors. Such deviations are captured by the V2X DS residual and a V2X attack is detected when the V2X DS  residual crosses a threshold. Upon detection of a V2X cyberattack in a vehicle, a V2X flag is generated and the corresponding second phase of the algorithm is activated for that vehicle. This is the V2I isolation scheme (IS). Similar to the first phase, the V2I IS investigates individual vehicle performance to generate a V2I flag if the performance deterioration is due to a V2I attack
and hence the V2I IS residual crosses a threshold.
To achieve this goal, we consider a bank of $n$ number of switched-mode detectors and $n$ number of switched-mode isolators corresponding to the $n$ follower vehicles of the platoon. All of these $2n$ detectors and isolators have $m$ operational modes corresponding to the $m$ operational modes of the vehicle platoon. Similarly, all the $n$ detectors and isolators run with the same mode $\alpha$ at a time and receive the mode information from the supervisory controller. 
The proposed detection-isolation algorithm is shown as a schematic diagram in \mbox{Fig. \ref{fig:long_dia}}.

Furthermore, the V2X DS and the V2I IS should be robust to uncertainties that may arise from the unmodeled system dynamics or measurement noises. Contrarily, they must remain sensitive to cyberattacks that are specifically crafted by an adversary to drive the system towards unsafe or undesired states. Therefore, we designed all the detector and isolator gains to ensure the following performance guarantees \cite{ghosh2023security}: 

    \noindent
    \textbf{Disturbance-to-state stable (DSS):} DSS refers to the condition, where for any driving mode $\alpha$ for a vehicle, the residual for the detector/isolator must remain bounded under bounded V2V and/or V2I attack and under bounded uncertainties.% This criteria automatically ensures that the residual remains below a threshold for no attack and no uncertainty. 

   \noindent
    \textbf{Robustness:} To ensure robustness, the gains must be chosen such that the effect of the uncertainties on the residual is suppressed by a robustness factor (RF), which must be minimized.
    
\noindent
    \textbf{Sensitivity:} Conversely, we want to choose the gains such that it amplifies the effect of every attack by a  sensitivity factor (SF), which must be maximized.

\begin{remm}
       RF and SF cannot be arbitrarily minimized or maximized simultaneously since their impact on the gain is opposite. For a given gain, if SF is maximized, the algorithm will be highly sensitive to both attacks and uncertainties. This increases the probability of attack detection as well as the probability of invalid flag generation due to uncertainties  leading to false alarms. Conversely, if  the RF is kept small, then the algorithm will be less sensitive and more robust to uncertainties while suffering from a high chance of missed detection in presence of cyberattacks. Consequently, we use an optimization scheme to address this robustness-sensitivity trade-off. Specifically, we utilize the optimization weighting constant to prioritize between these two factors and thus, to obtain an appropriate balance between the robustness and the sensitivity of the algorithm. Lastly, we note that the robustness-sensitivity balance varies with the design criteria of preferred uncertainty tolerance and detectable attack magnitude. It remains unaffected by the complexity and patterns of cyberattacks and thus, will be effective for new and emerging attacks of appropriate magnitude.
       
\end{remm}

\noindent

%%%%%%%%%%%%%%%%%%

\section{V2X Detection Scheme} \label{3}
In this section, we first present the V2X detector dynamics and then obtain the design conditions in the form of  linear matrix inequalities (LMIs) to ensure the desired performance.
\subsection{V2X detector dynamics}
Let us first define the augmented detector dynamics \eqref{chihat}, \mbox{$\forall i \in \{ 1, 2, \cdots,n\}$} considering the system model \eqref{zeta}  such that  the detector  $i$  corresponds to the $i^{th}$ vehicle of the platoon.
\begin{equation}\label{chihat}
    \dot{\hat{\zeta}}_i =  \mathcal{A}_\alpha \hat{\zeta}_i + \mathcal{D}_{\theta \alpha} \hat{\zeta}_{i-1} - \mathcal{H}_{\alpha} s_i +\mathfrak{L}_\alpha (y_i-\hat{y}_i),\hspace{3mm}\hat{y}_i = \mathcal{C} \hat{\zeta}_i.
\end{equation} Here $\hat{\zeta}_i \in \mathbb{R}^{4}$ and $\hat{y}_i\in \mathbb{R}^{q}$ are detector state and output respectively.   $\mathfrak{L}_\alpha \in \mathbb{R}^{4 \times q}$ is the  detector gain matrix to be designed. We note here that  $\mathfrak{L}_\alpha$ is the same for each detector due to the homogeneity assumption on the platoon. 
% However, the gain is designed separately for   mode $\alpha$. 

\subsection{Error dynamics and V2X DS residual generation}
The  error between the platoon dynamics and the detector dynamics is defined as $e_i = {\zeta}_i - \hat{\zeta}_i$. Thus, form \eqref{zeta} and \eqref{chihat} the error dynamics can be written as
\begin{align}\label{err}
    \dot{e}_i & = \left[\mathcal{A}_{\alpha} - \mathfrak{L}_\alpha \mathcal{C} \right] e_i + \mathcal{D}_{\theta \alpha} {e}_{i-1}  +E_{\theta i} f_{\theta i}\nonumber \\ & + E_{\alpha i} f_{\alpha i} + E_{\theta \alpha i} f_{\theta \alpha i} +w_i.
\end{align}

\noindent
Since the deviation in headway indicates the possibility of extreme scenarios such as rear-end collisions or CACC disengagement,  the  detector output $r_i$ is defined as 
\begin{align}\label{r}
    r_i =  h_i - \hat{h}_i  =  N e_i,
\end{align}  where $\hat{h}_i$ is the estimated headway and $N = \begin{bmatrix}
    1 & 0 & 0 & 0
\end{bmatrix}\in {R}^{1\times 4} $. We note here that  \eqref{err} indicates that the proposed DS \eqref{chihat} is impacted by the preceding vehicle's error $e_{i-1}$.  Therefore, we define the generated \textit{V2X DS residual}, $r_{c, i}$ as
\begin{equation}\label{rc}
    r_{c, i} =  \max{\left(0,  \left[ \lVert r_i \rVert_2^2 - \lVert e_{i-1} \rVert_2^2 \right]\right)}.
\end{equation}
\subsection{Detector gain design}\label{v2x ds degin}
Table \ref{tab:th1_var} enlists the definitions of necessary constants and matrices to state the LMI criteria for the detector gain matrix $\mathfrak{L}_\alpha$ to ensure the desired performance in \mbox{Theorem \ref{th1}}.  

\begingroup   
\setlength{\tabcolsep}{10pt} % Default value: 6pt
\renewcommand{\arraystretch}{1.7} % Default value: 1
\begin{table}[h!]
    \centering
\begin{tabular}{|*{2}{c|}}
\hline
  \multicolumn{2}{|c|}{Constants and matrices utilized in Theorem \ref{th1}}\\
  \hline
  ${\gamma_1}_i =  - \min\limits_\alpha  \frac{\lambda_{\min[\Lambda_{1_{\alpha i}}]}}{\lambda_{\min[N^T N]}}$ & $\rho_{2_i} = \frac{\rho_{4_i}}{\rho_{3_i}}$  \\
  \hline
  $\mathfrak{N}  = N^T{P_\alpha} N$ & $\mathcal{W}_{\alpha i} =  \begin{bmatrix}
             {P_\alpha} N \mathcal{D}_{\theta \alpha} &  {P_\alpha} N  
        \end{bmatrix}$ \\
  \hline
  \multicolumn{2}{|c|}{$F_{\alpha i} =    \begin{bmatrix}
             {P_\alpha} N \mathcal{D}_{\theta \alpha} & {P_\alpha} N  E_{\alpha i} & {P_\alpha} N  E_{\theta i} & {P_\alpha} N  E_{\theta \alpha i}
        \end{bmatrix}$} \\
  \hline
  \multicolumn{2}{|c|}{$ \Lambda_{1_{\alpha i}} =  {\left(\mathcal{A}_{\alpha} - \mathfrak{L}_\alpha \mathcal{C} \right)}^T\mathfrak{N}    + \mathfrak{N}    {\left(\mathcal{A}_{\alpha} - \mathfrak{L}_\alpha \mathcal{C} \right)}$}\\
  \hline
  \multicolumn{2}{|c|}{$ \Lambda_{2_{\alpha i}} = \Lambda_{1_{\alpha i}} + \frac{1}{{\beta_1}_i} \mathfrak{N}   \mathcal{D}_{\theta \alpha} \mathcal{D}_{\theta \alpha}^T \mathfrak{N}  + \frac{1}{{\beta_2}_i} \mathfrak{N}   E_{\alpha i} E_{\alpha i}^T \mathfrak{N} $}\\
  \multicolumn{2}{|c|}{\hspace{5mm}$+ \frac{1}{{\beta_3}_i} \mathfrak{N}  E_{\theta i} E_{\theta i}^T \mathfrak{N}  + \frac{1}{{\beta_4}_i} \mathfrak{N}   E_{\theta \alpha i}  E_{\theta \alpha i}^T \mathfrak{N}  + \frac{1}{{\beta_5}_i} \mathfrak{N}  \mathfrak{N}$}\\
  \hline
  \multicolumn{2}{|c|}{$\mathcal{G}_1 = diag[I,-I -{\rho_{1_i}} I_4]$}\\
  \hline
  \multicolumn{2}{|c|}{$ \mathcal{G}_2  = diag[{\rho_{3_i}} I, -{\rho_{3_i}} I, -{\rho_{4_i}} I_4, -{\rho_{4_i}} I_4, -{\rho_{4_i}} I_4]$}\\
  \hline
  $ \mathcal{B}_{1_{\alpha i}} = \begin{bsmallmatrix}
        -{\gamma_1}_i I & \mathcal{W}_{\alpha i}  \\
       \mathcal{W}_{\alpha i}^T  & \textbf{0} 
        \end{bsmallmatrix}$ & $\mathcal{B}_{2_{\alpha i}} = \begin{bsmallmatrix}
            {\gamma_1}_i I & F_{\alpha i}\\
            F_{\alpha i}^T  & \textbf{0}
        \end{bsmallmatrix}$\\
        \hline
\end{tabular}
\vspace{1mm}
\caption{List of required constants and matrices}
    \label{tab:th1_var}
\end{table}
\endgroup

\begin{thmm}\label{th1}
    Let us  consider the vehicle platoon dynamics \eqref{zeta} and the V2X detector dynamics \eqref{chihat}. The $i^{th}$ detector  is considered  DSS, robust to uncertainty, and sensitive to cyberattacks for every mode $\alpha$, if there exists a set of positive symmetric matrices $\mathcal{P} := \{ {P_\alpha} : \alpha \in \mathcal{M} \}$ and constants ${\rho_j}_i,  {\beta_j}_i,{\gamma_1}_i > 0, \forall j \in \{1,2,3,4 \}$ and $0\leqslant {\mu}_i \leqslant 1$, such that the constrained multi-objective optimization problem posed below has a feasible solution.
        \begin{align}
        \max\limits_{\mathfrak{L}_\alpha} \big[{\mu}_i (-{\rho_{1_i}}) &+ (1-{\mu}_i) {\rho_{2_i}} \big],\hspace{2mm} \forall i \in \{1,2,\cdots, n\}; \label{opt}\\
        \text{Subjected to } &\Lambda_{2_{\alpha i}} < 0, \hspace{11mm} \text{(DSS criteria)}  \label{iss_con} \\ 
        &\mathcal{G}_1 + \mathcal{B}_{1_{\alpha i}} \leqslant 0,\hspace{3mm} \text{(Robustness criteria)}  \label{rob_con} \\
        &\mathcal{G}_2-\mathcal{B}_{2_{\alpha i}} \geqslant 0, \hspace{4mm} \text{(Sensitivity criteria)}  \label{sen_con}
    \end{align}
      where $\Lambda_{2_{\alpha i}},\mathcal{G}_1,\mathcal{B}_{1_{\alpha i}},\mathcal{G}_2, \mathcal{B}_{2_{\alpha i}}$ are defined in Table \ref{tab:th1_var}.
   
\end{thmm}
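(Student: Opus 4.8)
The plan is to certify all three properties from a single switched quadratic storage function attached to the error dynamics \eqref{err}, namely $V_i=e_i^{T}\mathfrak{N}e_i$ (equivalently $V_i=P_\alpha\lVert r_i\rVert_2^{2}$, since $r_i=Ne_i$ and $\mathfrak{N}=N^{T}P_\alpha N$). Differentiating $V_i$ along \eqref{err} gives
$\dot V_i=e_i^{T}\Lambda_{1_{\alpha i}}e_i+2e_i^{T}\mathfrak{N}\mathcal{D}_{\theta \alpha}e_{i-1}+2e_i^{T}\mathfrak{N}(E_{\alpha i}f_{\alpha i}+E_{\theta i}f_{\theta i}+E_{\theta \alpha i}f_{\theta \alpha i})+2e_i^{T}\mathfrak{N}w_i$, with $\Lambda_{1_{\alpha i}}$ exactly the matrix of Table~\ref{tab:th1_var}. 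I would also record, for later use, the pointwise bound $e_i^{T}\Lambda_{1_{\alpha i}}e_i\leqslant-{\gamma_1}_i\lVert r_i\rVert_2^{2}$, which follows from the definition of ${\gamma_1}_i$ together with $r_i=Ne_i$ and the fact that \eqref{iss_con} forces $\Lambda_{1_{\alpha i}}\prec\mathbf{0}$ (it is $\Lambda_{2_{\alpha i}}$ minus positive-semidefinite terms); this is the device that lets the residual-directional, rank-deficient $\mathfrak{N}$ dominate $\lVert r_i\rVert_2$.

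For the DSS claim \eqref{iss_con}, I would bound the four cross terms of $\dot V_i$ by Young's inequality with the free weights ${\beta_j}_i$, e.g. $2e_i^{T}\mathfrak{N}\mathcal{D}_{\theta \alpha}e_{i-1}\leqslant\tfrac{1}{{\beta_1}_i}e_i^{T}\mathfrak{N}\mathcal{D}_{\theta \alpha}\mathcal{D}_{\theta \alpha}^{T}\mathfrak{N}e_i+{\beta_1}_i\lVert e_{i-1}\rVert_2^{2}$ and analogously for the three attack channels and for $w_i$ with ${\beta_2}_i,\dots,{\beta_5}_i$. Collecting the quadratic-in-$e_i$ contributions reproduces precisely $\Lambda_{2_{\alpha i}}$, so \eqref{iss_con} gives $\dot V_i\leqslant-\epsilon\lVert e_i\rVert_2^{2}+{\beta_1}_i\lVert e_{i-1}\rVert_2^{2}+{\beta_2}_i\lVert f_{\alpha i}\rVert_2^{2}+{\beta_3}_i\lVert f_{\theta i}\rVert_2^{2}+{\beta_4}_i\lVert f_{\theta \alpha i}\rVert_2^{2}+{\beta_5}_i\lVert w_i\rVert_2^{2}$ for some $\epsilon>0$. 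Using the average-dwell-time assumption to control the jumps of $V_i$ at mode switches, and then inducting along the string index $i$ (with $e_0$ bounded --- indeed zero if the leader state is measured and not attacked), this propagates into disturbance-to-state boundedness of the generated residual $r_{ci}$ of \eqref{rc} with respect to the attack signals and the uncertainty.

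For robustness \eqref{rob_con} I would \emph{not} complete the $e_{i-1}$ and $w_i$ cross terms; instead, with the attack signals set to zero, I substitute $e_i^{T}\Lambda_{1_{\alpha i}}e_i\leqslant-{\gamma_1}_i\lVert r_i\rVert_2^{2}$ and apply the Schur complement to $\mathcal{G}_1+\mathcal{B}_{1_{\alpha i}}\leqslant0$ (the blocks $-I$ and $-{\rho_1}_i I_4$ being invertible) to obtain a dissipation inequality of the form $\dot V_i+\lVert r_i\rVert_2^{2}\leqslant\lVert e_{i-1}\rVert_2^{2}+{\rho_1}_i\lVert w_i\rVert_2^{2}$; integrating and recalling $r_{ci}=\max(0,\lVert r_i\rVert_2^{2}-\lVert e_{i-1}\rVert_2^{2})$ yields $\int_0^{\infty}r_{ci}\,dt\leqslant V_i(0)+{\rho_1}_i\lVert w_i\rVert_{\mathcal{L}_2}^{2}$, which identifies ${\rho_1}_i$ as the robustness factor to be minimised. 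Symmetrically, for sensitivity \eqref{sen_con} I drop $w_i$ and apply the Schur complement to $\mathcal{G}_2-\mathcal{B}_{2_{\alpha i}}\geqslant0$ to get a reverse dissipation inequality of the form $\rho_{3_i}\lVert r_i\rVert_2^{2}+\rho_{3_i}\lVert e_{i-1}\rVert_2^{2}\geqslant\rho_{4_i}(\lVert f_{\alpha i}\rVert_2^{2}+\lVert f_{\theta i}\rVert_2^{2}+\lVert f_{\theta \alpha i}\rVert_2^{2})+\dot V_i$, whose integration shows the residual energy is amplified by at least $\rho_{2_i}=\rho_{4_i}/\rho_{3_i}$, the sensitivity factor to be maximised. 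The weighted cost \eqref{opt} then simply encodes the robustness--sensitivity trade-off noted just above; feasibility of \eqref{iss_con}--\eqref{sen_con} is the hypothesis, so the theorem follows.

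I expect the main obstacles to be structural rather than computational: (i) the gain $\mathfrak{L}_\alpha$ enters $\Lambda_{1_{\alpha i}}$ only through $\mathfrak{N}\mathfrak{L}_\alpha\mathcal{C}$, so turning \eqref{iss_con}--\eqref{sen_con} into genuine LMIs requires the change of variable $Y_\alpha=\mathfrak{N}\mathfrak{L}_\alpha$ and a recovery step; (ii) because $\mathfrak{N}=N^{T}P_\alpha N$ is rank-deficient, Step~2 only produces decay in $\lVert e_i\rVert$ and not a clean $\dot V_i\leqslant-cV_i$, so the DSS conclusion must be routed through $\mathcal{L}_2$/comparison-lemma estimates; and (iii) the cascade in the string index together with the mode switching forces the dwell-time bound to be used uniformly in $i$ so that boundedness propagates from $e_0$ to $e_n$. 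Making this interconnected, switched argument rigorous while only possessing a residual-directional storage function is, I believe, the crux; the completion-of-squares and Schur-complement manipulations themselves are routine. (A matching necessity statement, if desired, would follow from a lossless S-procedure / converse-dissipativity argument applied to the same three supply rates.)
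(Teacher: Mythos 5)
Your proposal follows essentially the same route as the paper's proof: the same residual-directional storage function $\Gamma_{\alpha i}=e_i^{T}\mathfrak{N}e_i$, the same Young's-inequality completion with the weights ${\beta_j}_i$ to produce $\Lambda_{2_{\alpha i}}$, the same recursion along the string index starting from $e_0=0$ (the paper does this via Gronwall's inequality), and the same dissipation-inequality reading of $\mathcal{G}_1+\mathcal{B}_{1_{\alpha i}}\leqslant 0$ and $\mathcal{G}_2-\mathcal{B}_{2_{\alpha i}}\geqslant 0$ for robustness and sensitivity (you state it pointwise, the paper integrates the quadratic forms in $\chi_{1i},\chi_{2i}$ — these are equivalent). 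The obstacles you flag are real but are exactly where the paper is terse rather than where it diverges from you, so no further comparison is needed.
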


\begin{proof}
    The proof is shown in the appendix.
\end{proof}

\begin{remm}
    The DSS criteria ensure that the V2X DS residual $r_{c, i}(t)$ exponentially converges to zero under no cyberattacks and no uncertainties, i.\,e., $f_{\theta i}, f_{\alpha i}, f_{\theta \alpha i} , w_i  = 0$. This proves the nominal exponential stability criteria for the V2X DS. 
    % Due to the brevity of the proof,  the analysis was not presented here.
\end{remm}
Hence, from Theorem \ref{th1} we have obtained the LMI conditions for detector gains to ensure the  performance benchmarks for the V2X DS.  Next, we will define the V2X detector threshold to make an attack decision.

\subsection{V2X attack decision and V2I IS activator} \label{v2x threshold}
Each V2X DS residual generated from the $n$ number of detectors is compared with a pre-defined threshold ${J}_{DS}$ to make an attack detection. If the generated V2X DS residual  $r_{c, i}$ \eqref{rc} crosses the threshold ${J}_{DS}$, then a V2X attack decision is made and the second phase which is the V2I  IS is activated. The threshold ${J}_{DS}$ can be defined by observing the probability of the false alarm $P_{D}$  corresponding to that specific threshold value, i.\,e., the probability of the residual  $r_{c, i}$ crossing the threshold ${J}_{DS}$  under nominal operation. First, we generate the V2X DS residual data $r_{w}$  under no-attack and in presence of uncertainties scenarios ($\Delta_\alpha , \Delta_\theta = 0, w_i \neq 0 $) from Monte-Carlo simulations. In practical application, we can test-run the vehicle platoon under different driving conditions such as urban, suburban, and highway routes at diverse time intervals within a day to collect the necessary V2X DS residual data $r_{w}$. Then,  we can obtain the false alarm probability $P_{D}$ as: 
\vspace{-1.87mm}
\begin{align}\nonumber
    P_{D} & = P\left( r_{c, i} > {J}_{DS} | \Delta_\alpha = 0 \, \& \, \Delta_\theta = 0 \right)  = \int_{{J}_{DS}}^{\infty}\!\!\! P(r_{w}) \,dr_{w},
\end{align}
$r_{c, i}  \geqslant \mathcal{J}_{DS}$ implies V2X attack flag is generated for the $i^{th}$ vehicle and simultaneous activation of V2I IS, otherwise there is no attack.

\section{V2I Isolation Scheme} \label{ie}
Upon V2X attack flag generation and  V2I  IS activation, it is confirmed that single or multiple vehicles are under cyberattacks that can show up in vehicle dynamics as 3 cases:
\begin{description}

    \item[Case 1.] \textit{only V2V attack:} presence of \, $ \Tilde{\theta}_{i} \Tilde{u}_{i-1}$ \, ($E_{\theta i} f_{\theta i}$) term, 
    \item[Case 2.] \textit{only V2I attack:} presence of \, $ \Tilde{\alpha}_{i}$ \, ($E_{\alpha i} f_{\alpha i}$) term, and 
    \item[Case 3.] \textit{simultaneous V2V-V2I attacks:} presence of both  $ \Tilde{\theta}_{i} \Tilde{u}_{i-1}$ and \, $ \Tilde{\alpha}_{i}$ \,  ($E_{\theta \alpha i} f_{\theta \alpha i}$) terms.
\end{description}
Now our objective is to isolate the  V2I attacks, i.e., presence of $ \Tilde{\alpha}_{i}$ in the event of Cases 2 and 3 with residual-based observers.
To achieve this objective, the V2I IS acquires and utilizes  the data transmitted through the V2V network channel. Thus, if the compromised $i^{th}$ vehicle receives $\Tilde{\theta}_{i} \Tilde{u}_{i-1}$, the corresponding V2I IS model contains the same term. Hence, the incorporation of this V2V data essentially equips the V2I IS scheme with the ability to match both the nominal and corrupt V2V communication scenarios.% and so the mismatched term $ \Tilde{\theta}_{i} \Tilde{u}_{i-1}$ do not appear in the isolator dynamics. 
Therefore, in the event of Case 1, the V2I IS residual never crosses the threshold.

 Contrarily, in Cases 2 and 3, the V2I IS dynamics still run with expected mode $\alpha$, while the compromised vehicle dynamics run with the compromised mode $\Tilde{\alpha}_{i}$. This mismatch eventually drives the V2I IS residual to cross the threshold and an isolation flag is generated indicating the presence of V2I attack.  Mathematically, the V2V information is given through $ f_{\theta i} $, which is available to $i$-{th} V2I IS observer, while $f_{\alpha i},  f_{\theta \alpha i} $ are still unknown to the observer. 
Thus, the {  V2I isolator dynamics }is modeled following the system model \eqref{zeta} with the admitted V2V communication as 
\begin{align}\label{chihat2}
    \dot{\hat{\xi}}_i & =  \mathcal{A}_\alpha \hat{\xi}_i  + \mathcal{D}_{\theta \alpha} \hat{\xi}_{i-1} - \mathcal{H}_{\alpha} s_i +  E_{\theta i} f_{\theta i} +\mathfrak{M}_\alpha (y_i-\hat{Y}_i),\nonumber \\  \hat{Y}_i & = \mathcal{C} \hat{\xi}_i, \quad \quad t \geqslant t_0; \quad \quad \hat{\xi}_i(t_0) = \hat{\zeta}_i (t_0).
 \end{align}
Here $\hat{\xi}_i$, $\hat{Y}_i$, and $\mathfrak{M}_\alpha$ are respectively the isolator state, output, and gain matrix and $t_0$ is the time of V2X attack detection.  %which are designed such that the estimation error exponentially converges to zero when the vehicle platoon is under V2V attack only and no V2I attack is occurring. 
  We note here, $\hat{\xi}_{i-1} (t) = \hat{\zeta}_{i-1} (t)$ if the $(i-1)^{th}$ isolator is not activated.
% The isolators are activated with available GNSS data and the corresponding V2X detector state upon cyberattack detection.
% . Here $E_\theta f_\theta $ implies the absence of V2V communication. %after neglecting V2V communication to address the presence of V2V attack.
Now, the error between the system and the isolator dynamics is defined as $d_i := \zeta_i - \hat{\xi}_i$. 
From \eqref{zeta} and \eqref{chihat2} and with $\mathcal{A}_{\mathfrak{M}_\alpha} = \mathcal{A}_{\alpha} - \mathfrak{M}_\alpha \mathcal{C} $, the error dynamics for the IS:
\begin{align}\label{err2}
    \dot{d}_i = \mathcal{A}_{\mathfrak{M}_\alpha} d_i  +\mathcal{D}_{\theta \alpha} d_{i-1}  +  E_{\alpha i} f_{\alpha i}   + E_{\theta \alpha i} f_{\theta \alpha i}  +w_i.
\end{align}
Now, we denote the generated residual from these V2I IS observers as \textit{V2I IS residual} $\psi_i = h_i - \hat{H}_i = N d_i.$ Here, $\hat{H}_i$ is estimated headway under V2V attack. From \eqref{err2}, we
can conclude that  the generated V2I IS residual $| \psi_i |$ will exponentially  converge to zero during Case 1. However, during Cases 2 and 3,  the generated V2I IS residual $| \psi_i |$ will cross the threshold $\mathcal{J}_{IS}$ to create V2I attack flag. Furthermore, similar to the V2X  DS, the V2I  IS   gain matrix $\mathfrak{M}_\alpha$, and the pre-defined threshold $\mathcal{J}_{IS}$ can be designed by following  steps described in Section \ref{v2x ds degin}-\ref{v2x threshold}. We point out here that for both the V2X DS and the V2I IS, the individual residual comparison enables the proposed algorithm to \textit{identify the individual compromised vehicles} and thus, isolate the source of attacks.

 Once the detection-isolation decision has been made,  the system administrator can use an uncompromised channel or application to let the vehicles in the platoon know. At this stage, the platoon can disengage from standard driving and activate emergency ``limp mode home" \cite{dehghan2016sensorless}.

\section{Simulation Results} \label{4}

In this section, we present the simulation results 
% to demonstrate the effectiveness of the proposed V2X  DS and V2I  IS. Here, we consider a 
for a CAV platoon of 12 vehicles with one leader vehicle and 11 follower CAVs. Furthermore, we consider two different driving environments for the platoon, namely, highway driving and urban driving. To simulate  the highway driving scenario, we extracted data for vehicle Id 1067 from \href{https://ops.fhwa.dot.gov/trafficanalysistools/ngsim.htm}{NGSIM} data and for the urban driving scenario, we utilized the \href{https://www.epa.gov/emission-standards-reference-guide/epa-urban-dynamometer-driving-schedule-udds}{UDDS} data. %Those two trajectory data sets are used to simulate the driving trajectory of the leader vehicle under a changing driving environment.

Next, for the follower vehicles of the platoon, we choose the engine dynamics time constant to be $\Sigma = 0.1s$. For  nominal highway driving the vehicle controllers have gain values $k_1 = \begin{bmatrix}
    0.2 & 1 & 2
\end{bmatrix}$ and the desired time-headway $\mathcal{T}_1 = 0.5s$. Similarly, for nominal urban driving, the vehicle controllers have gain values $k_2 = \begin{bmatrix}
    0.2 & 2 & -0.6
\end{bmatrix}$ and the desired time-headway $\mathcal{T}_2 = 1.2s$ which is larger than that of highway driving due to the frequent stop-and-go scenario.
% We note here, due to the frequent stop-and-go scenario in urban driving, the desired time headway is larger than the desired time headway for highway driving. 
The detector gains, $\mathfrak{L}_1 $ and isolator gains, $\mathfrak{M}_1 $ for highway driving environment we choose $\mathfrak{L}_1 = \mathfrak{M}_1 =  \begin{bmatrix}
    3 & 0.2 & 5 & 0.9 \\
    3 & -0.2 & -0.5 & 0.8
\end{bmatrix}$ 
%and $\mathfrak{M}_1 =  \begin{bmatrix}
    % -1 & 0.2 & 0.8 & -0.4 \\
    % 1.5 & 0.1 & 0.1 & -0.1
% \end{bmatrix}$  
such that they ensure performance benchmarks. Similarly, for the urban driving environment, are chosen to be $\mathfrak{L}_2 = \mathfrak{M}_2 =  \begin{bmatrix}
    0.6 & 0.2 & 0.8 & 0.2 \\
    1.2 & 0.2 & 0.8 & 0.2
\end{bmatrix}$. 
Additionally, the system uncertainty is considered to be Gaussian i.e., $w_i \sim \mathcal{N}(0,1)[1,1]$ with maximum $\pm 5\%$ of velocity. The values of the thresholds $\mathcal{J}_{DS}$, and $\mathcal{J}_{IS}$ are chosen to be 0.25 and 0.75 respectively. Now, we craft three attack scenarios to capture the various CAV attack complexities and variabilities such as single or multiple vehicle disruptions, different driving conditions, potentially disastrous consequences,  and all possible cases of network corruption, i.e. only V2V, only V2I, and V2X attacks.

\subsection{Replay attack on single vehicle through V2V network}
\begin{figure}
    \centering
    \includegraphics[trim = 6mm 0mm 0mm 0mm, clip,  width=0.8\linewidth]{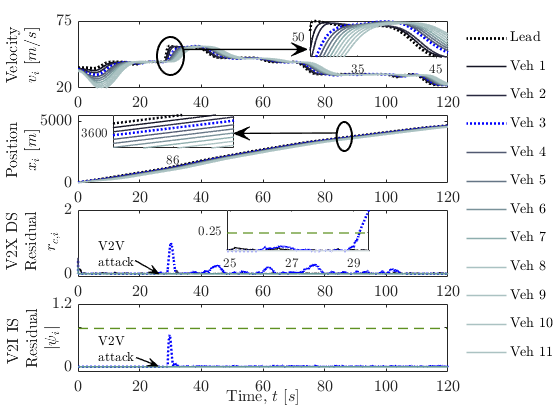}   %%XXX.png/eps 
    \caption{Under V2V cyberattack, the plot shows (top)  the velocity, (second) the position, (third) the V2X DS residual, and (last) the V2I IS residual.}
    \label{fig:v2v}
\end{figure}
First, we simulate the only V2V attack scenario ($f_{\theta i} \neq 0$, $f_{\alpha i}, f_{\theta \alpha} i = 0$)  on the vehicle platoon running in a highway scenario. The V2V attack impacts the $3^{rd}$ vehicle, i.e., the V2V communication between the $2^{nd}$ vehicle and the $3^{rd}$ vehicle has been compromised. The adversary first records $u_2$ signal from $18^{th} s$ to $25^{th} s$ and then starts to repeatedly send this $7s$-long signal  to the $3^{rd} $ vehicle from $25^{th} s$.   Thus, the $3^{rd}$ vehicle effectively tracks the wrong desired acceleration leading to string instability \cite{ploeg2011design}.   
% This causes the vehicle to fail to match the velocity changes of the preceding vehicle. 
This is evident from the top two plots of \mbox{Fig. \ref{fig:v2v}} (especially the zoomed inset plots) showing the position and velocity of the vehicles in the platoon.   
Now, the V2X detector can detect the attack on this platoon within 5s as the $3^{rd} $ vehicle V2X DS residual $r_{c, 3}$  crosses the threshold as shown in the third plot of \mbox{Fig. \ref{fig:v2v}}.
After a V2X detection, the $3^{rd}$ V2I isolator is activated. However, since the platoon is under a V2V attack, the V2I IS residual $|\psi_3|$ remains under the threshold for all time (last plot in \mbox{Fig. \ref{fig:v2v}}).

\subsection{FDI attack on single vehicle through V2I network}
\begin{figure}
    \centering
    \includegraphics[trim = 3mm 0mm 0mm 0mm, clip,  width=0.8\linewidth]{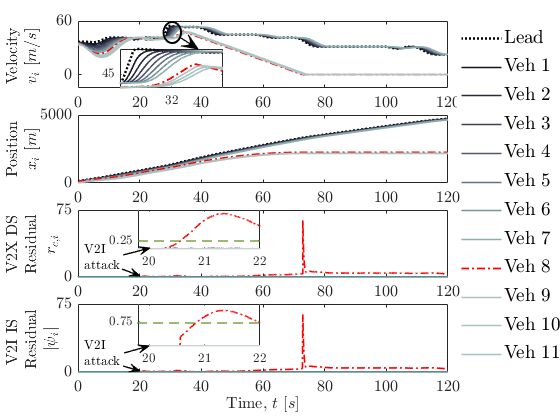}   %%XXX.png/eps 
    \caption{Under V2I cyberattack, the plot shows (top)  the velocity, (second) the position, (third) the V2X DS residual, and (last) the V2I IS residual.}
    \label{fig:v2i}
\end{figure}
\vspace{-2mm}Here we simulate the platoon running in a highway scenario under the only V2I attack scenario ($f_{\alpha i} \neq 0$, $f_{\theta i}, f_{\theta \alpha i} = 0$) which impacts the $8^{th}$ vehicle from $t = 20 s$.
In this FDI attack,  the controller mode signal is $\alpha$ altered such that the $8^{th}$ vehicle effectively runs with the controller gains and the desired headway for the urban driving environment. This causes the vehicle to lag behind the $7^{th}$ vehicle further with time and ultimately CACC disengagement due to the vast distance from the preceding vehicle. The rest of the vehicles that are following the $8^{th}$ vehicle are also disengaged. This is evident from the top two plots of \mbox{Fig. \ref{fig:v2i}}  showing the position and velocity of the vehicles in the platoon, which is also expected from previous experimental studies \cite{gunter2020commercially}. Now, the V2X residual of the vehicle 8  $r_{c, 8}$ crosses the threshold within 2s of attack occurrence to accurately detect the attack as shown in the third plot of \mbox{Fig. \ref{fig:v2i}}.
After a V2X detection, the $8^{th}$ V2I isolator is activated. Furthermore, as the platoon is under V2I attack, the V2I IS residual $|\psi_8|$ crosses the threshold  confirming the presence of the V2I attack  within 2s of activation. This is shown in the last plot in \mbox{Fig. \ref{fig:v2i}}.

\subsection{DDoS \& FDI attacks on multiple vehicles through  V2V-V2I networks}
\begin{figure}
    \centering
    \includegraphics[trim = 5mm 0mm 0mm 0mm, clip,  width=0.8\linewidth]{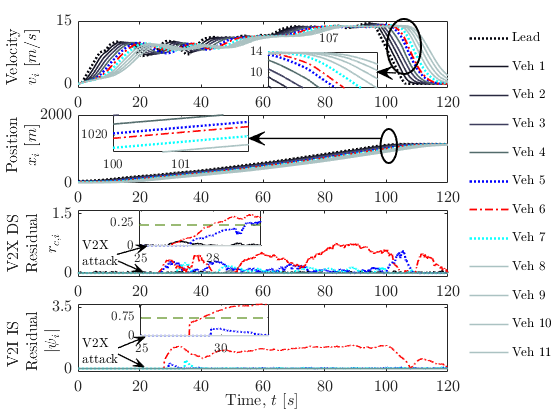}   %%XXX.png/eps 
    \caption{Under V2X cyberattack, the plot shows (top)  the velocity, (second) the position, (third) the V2X DS residual, and (last) the V2I IS residual.}
    \label{fig:v2x}
\end{figure}
Lastly, we simulate the vehicle platoon under  the simultaneous V2V-V2I attack scenario ($f_{\theta i}, f_{\alpha i}, f_{\theta \alpha i} \neq 0$)   running in an urban driving scenario. Firstly, the adversary injects the DDoS attack on the V2V network of the $5^{th} $ to $7^{th} $ follower vehicles to disconnect them. Next, the adversary injects an FDI attack on the V2I network of the $6^{th} $ vehicle. The V2V and V2I attacks are injected at $t = 25 s$  and these lead to string instability and a rear-end collision scenario which is clearly evident in the first two plots of \mbox{Fig. \ref{fig:v2x}} showing the position and velocity of the vehicles in the platoon. Particularly, the zoomed portion of the velocity and position trajectory indicates the high probability of rear-end collisions. Furthermore, the V2X detector can detect the presence of the cyberattacks within 4s and activates corresponding isolators as shown in \mbox{Fig. \ref{fig:v2x}}. Then the $6^{th}$ V2I IS residual $|\psi_6|$  crosses the threshold and generates the V2I attack flag for this vehicle within 2s of activation.  On the other hand,  $ |\psi_5|,\, \& \,|\psi_7|$ remain below the threshold since only the V2V network is compromised for these vehicles.  This is shown in the last plot of \mbox{Fig. \ref{fig:v2x}}.

\subsection{Attack sensitivity and statistical significance}\label{stat section}
Reliable V2X attack detection and V2I attack isolation under several presently prevalent cyberattacks on the real-world NGSIM and UDDS data demonstrate the efficacy of the proposed algorithm in a practical setting. Next, to further evaluate the effectiveness of the algorithm at the chosen threshold levels under diverse attack intensities we perform a Monte Carlo simulation of 720 test runs on the algorithm and obtain the receiver operating characteristics or ROC curve. The algorithm exhibits an optimal balance between the attack sensitivity and the robustness (detection probability and false alarm probability) at the chosen thresholds $\mathcal{J}_{DS}$ and $\mathcal{J}_{IS}$ as shown in the left plot of the \mbox{Fig \ref{fig:stat}}. Furthermore, the right plot in \mbox{Fig \ref{fig:stat}} exhibits that on average, the algorithm can reliably detect the presence of the cyberattacks within 11 seconds of occurrence. However, low-impact attacks may remain undetected by the proposed algorithm leading to the 93$\%$ detection probability at the chosen threshold $\mathcal{J}_{DS}$.

\begin{figure}[h!]
    \centering
    \includegraphics[width=0.9\linewidth]{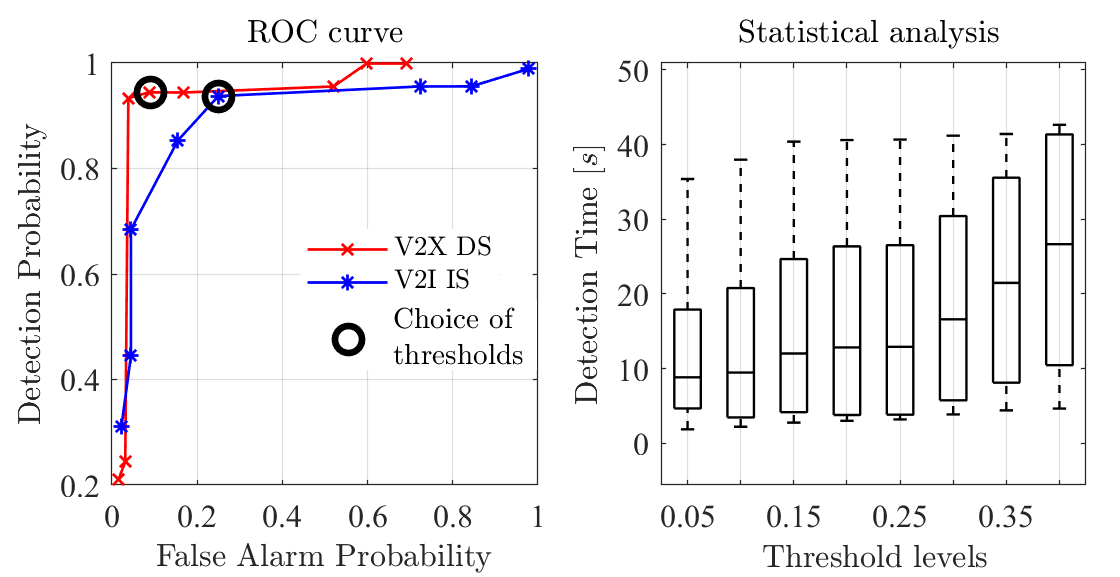}
    \caption{ Plots exhibit algorithm performance under diverse attack intensities at different threshold settings: (left) Receiver Operating Characteristics or ROC curve and (right) Statistical analysis of detection time.}
    \label{fig:stat}
\end{figure}

\section{Conclusion} \label{5}
In summary, we proposed a two-phase VANET cyberattack detection-isolation algorithm to first detect cyberattacks and thereafter isolate the presence of V2I attacks for a CAV platoon.  We  considered multiple operational modes for the vehicle platoon in order to accommodate the changing driving environments of the platoon. Additionally, these model-based detectors and isolators  were designed to ensure DSS with optimal robustness and sensitivity guarantees. Furthermore, we demonstrated the effectiveness of our algorithm using simulation case studies for three attack scenarios: only V2V, only V2I, and simultaneous V2V-V2I. The CAV driving environments were simulated here using UDDS and NGSIM data along with common cyberattacks such as DDoS, FDI, and replay attacks. In these case studies, our proposed algorithm successfully detected vehicular cyberattacks and isolated vehicles under a V2I attack.  While additional experiments are needed in CAV testbeds before practical adoption  and further analysis may be required to reevaluate the algorithm performance, if and when a new attack pattern emerges in the future, such detection decisions will ultimately help traffic administrators to employ steps towards damage mitigation while V2I attack isolation will help to secure central infrastructure access.

%%%%%%%%%%%%%%%%%%%%%%%%%%%%%%%%%%%%%%%%%%%%%%%%%%%%%%%%%%%%%%%%%%%%%%%%%%%%%%%%
\section*{APPENDIX}
    \begin{table}[h!]
    \centering
\begin{tabular}{|*{2}{c|}}
\hline
  \multicolumn{2}{|c|}{Constants utilized in DSS criteria analysis}\\
  \hline
  ${\beta_6}_i = \frac{{\beta_1}_i}{\lambda_{\min[N^TN]}}$ & ${\overline{\beta}_{F}}_i = \max \left[{\beta_2}_i, {\beta_3}_i , {\beta_4}_i \right]$  \\
  \hline
  $ {\gamma_F}_i = \frac{{\overline{\beta}_{F}}_i}{{\gamma_2}_i \lambda_{\min[{P_\alpha}]}} $ & $ \lVert F_i \rVert_2^2 = \lVert f_{\alpha i} \rVert_2^2 + \lVert f_{\theta i} \rVert_2^2 +\lVert f_{\theta \alpha i} \rVert_2^2 $\\
  \hline
  ${\gamma_P} = \frac{\lambda_{\max[{P_\alpha}]}}{\lambda_{\min[{P_\alpha}]}}$  & ${\gamma_2}_i =  - \min\limits_\alpha \frac{\lambda_{\min[\Lambda_{2_{\alpha i}}]}}{\lambda_{\min[\mathfrak{N}]}}$ \\
  \hline
  $ {\gamma_w}_i = \frac{{\beta_5}_i}{{\gamma_2}_i \lambda_{\min[{P_\alpha}]}} $ & ${\gamma}_i = \frac{{\beta_6}_i {\gamma_P}}{\left({\gamma_2}_i + {\gamma_2}_{i-1}\right)\lambda_{\min[{P_\alpha}]}}$ \\
  \hline
  \multicolumn{2}{|c|}{$\mathfrak{P}_i = \max\limits_{l\in \{1,\cdots,i\}}{[{\gamma}_l,{\gamma_{F}}_l,{\gamma_{w}}_l]}\prod\limits_{k = l+1}^i \frac{{\beta_6}_k}{{\gamma_2}_k\lambda_{\min[{P_\alpha}]}} $} \\
  \hline
\end{tabular}
\vspace{1mm}
\caption{List of required constants}
    \label{tab:DSS_var}
\end{table}
\vspace{-7.mm}
\noindent
\textbf{Proof of Theorem 1:}
 Let us  consider a family of Lyapunov  functions for the switching-mode detector dynamics \eqref{chihat} as:
    % note that the gains are the same for all vehicles of the platoon. Thus we show the proof for any detector $i \in \{1, 2, \cdots,n \}$. Next, we
\begin{equation}\label{v}
    \Gamma_{\alpha i} = r_i^T {P_\alpha} r_i = e_i^T N^T {P_\alpha} N e_i = e_i \mathfrak{N}    e_i ,
\end{equation} where ${P_\alpha} \in \mathcal{P}$. Differentiating the Lyapunov function with respect to time we obtain
\begin{align}
     \dot{\Gamma}_{\alpha i} =  & e_i^T \Lambda_{1_{\alpha i}} e_i + q \left( e_i, \mathfrak{N}   \mathcal{D}_{\theta \alpha}, e_{i-1} \right) \nonumber \\ & +  q \left( e_i, \mathfrak{N}    E_{\alpha i}, f_{\alpha i} \right)  
     + q \left( e_i, \mathfrak{N}    E_{\theta i}, f_{\theta i} \right) \nonumber \\ & + q \left( e_i, \mathfrak{N}    E_{\theta \alpha i}, f_{\theta \alpha i} \right)+ q \left( e_i, \mathfrak{N} ,   w_i \right).  \label{vdiss}
\end{align}

% \begin{align}
%         \dot{\Gamma}_{\alpha i} \hspace{1mm}&= \Gamma_1 + \Gamma_2+\Gamma_3+\Gamma_4+\Gamma_5 + \Gamma_6, \label{vdiss}
%         \end{align}
%         where the terms on the right side of  \eqref{vdiss} are as 

%         follows:
%            \begin{align}
%         &\Gamma_1  = e_i^T \Lambda_{1_{\alpha i}} e_i, \label{vd1}\\
%         &\Gamma_2  = e_i^T \mathfrak{N}   \mathcal{D}_{\theta \alpha} e_{i-1} + e_{i-1}^T \mathcal{D}_{\theta \alpha}^T \mathfrak{N}   e_i, \label{vde_1}\\
%         &\Gamma_3 = e_i^T \mathfrak{N}    E_{\alpha i} f_{\alpha i} + f_{\alpha i}^T E_{\alpha i}^T \mathfrak{N}   e_i, \label{vdfm}\\
%         &\Gamma_4  = e_i^T \mathfrak{N}    E_{\theta i} f_{\theta i} + f_{\theta i}^T E_{\theta i}^T \mathfrak{N}   e_i, \label{vdft} \\
%         &\Gamma_5  = e_i^T \mathfrak{N}    E_{\theta \alpha i} f_{\theta \alpha i} + f_{\theta \alpha i}^T E_{\theta \alpha i}^T \mathfrak{N}   e_i, \label{vdftm}\\
%         &\Gamma_6   = e_i^T \mathfrak{N}    w_i +  w_i^T\mathfrak{N}   e_i. \label{vdw}
%     \end{align}
 \noindent
    Now, utilizing the Lyapunov function, we will prove the theorem in three parts:- DSS, robustness, and sensitivity.

    \textbf{DSS:} The proposed detectors \eqref{chihat} are considered DSS, if   for the vehicle platoon dynamics \eqref{zeta}  under cyberattacks and uncertainties  (i.\,e., $f_{\alpha i}, f_{\theta i}, f_{\theta \alpha i} , w_i \neq 0$) the generated V2X DS  residual stays bounded, i.e. $r_{c, i} < \infty, \forall i \in \{1,\cdots,n\}$ . Before presenting the analysis, we present the definition of the required constants in the Table below.
Now, for $f_{\alpha i}, f_{\theta i}, f_{\theta \alpha i} , w_i \neq 0$, we have the derivative of $\Gamma_{\alpha i}$ from \eqref{vdiss}. Then,  using the formula $X^T Y + Y^T X \leqslant \beta X^T X + \frac{1}{\beta}Y^T Y$, for $\beta > 0$, we can write \eqref{vdiss} as
\begin{align}
    \dot{\Gamma}_{\alpha_i} \leqslant & e_i^T \Lambda_{2_{\alpha i}} e_i  + {\beta_1}_i \lVert e_{i-1} \rVert_2^2 + {\beta_2}_i \lVert f_{\alpha i} \rVert_2^2 + {\beta_3}_i \lVert f_{\theta i} \rVert_2^2 \nonumber \\ &+ {\beta_4}_i \lVert f_{\theta \alpha i} \rVert_2^2 + {\beta_5}_i \lVert w_i \rVert_2^2, \label{r_i}
\end{align} where $\Lambda_{2_{\alpha i}}$ is defined in Table \ref{tab:th1_var}.
Next, from \eqref{r}, we can write $\lambda_{\min[N^TN]} \lVert e_{i-1} \rVert_2^2  \leqslant \lVert r_{i-1} \rVert_2^2 $. Furthermore, we obtain  $e_i^T \Lambda_2{_{\alpha i}} e_i \leqslant -\lambda_{\min[\Lambda_{2_{\alpha i}}]} \lVert e_i\rVert_2^2$, for $\Lambda_{2_{\alpha i}} < 0$. Similarly, from \eqref{v} we can deduce  $\Gamma_{\alpha i} \geqslant \lambda_{\min[\mathfrak{N}]}\lVert e_i\rVert_2^2$. Thus, using ${\gamma_2}_i$,  ${\beta_6}_i$, ${\overline{\beta}_{F}}_i$, and $\lVert F_i \rVert_2^2$ defined in Table \ref{tab:DSS_var}, \eqref{r_i} can be rewritten as
% \begin{align}
%     \dot{\Gamma}_{\alpha_i} \leqslant & - {\gamma_2}_i \Gamma_{\alpha i}  + {\beta_6}_i \lVert r_{i-1} \rVert_2^2 + {\beta_2}_i \lVert f_{\alpha i} \rVert_2^2 + {\beta_3}_i \lVert f_{\theta i} \rVert_2^2 \nonumber \\ &+ {\beta_4}_i \lVert f_{\theta \alpha i} \rVert_2^2 + {\beta_5}_i \lVert w_i \rVert_2^2. \label{beta6}
% \end{align}
\begin{align}
    \dot{\Gamma}_{\alpha_i} \leqslant \!\! - {\gamma_2}_i \Gamma_{\alpha i}  + {\beta_6}_i \lVert r_{i-1} \rVert_2^2 + {\overline{\beta}_{F}}_i \lVert F_{ i} \rVert_2^2 +  {\beta_5}_i \lVert w_i \rVert_2^2. \label{beta6}
\end{align}
The presence of $r_{i-1}$ in \eqref{beta6}, shows that the cumulative impact of all the preceding vehicles, $\{0,1,\cdots i-1\}$ is embedded in the $i^{th}$ detector. Thus, we proceed to prove this recursively. 
% First, let us prove that the detector corresponding to the first follower vehicle is DSS. For the first follower vehicle, i.e., $i = 1$ the dynamics can be written as:
% \begin{align}
%      \dot{\zeta}_1 =&\mathcal{A}_\alpha \zeta_1 + \mathcal{D}_{\theta \alpha} \zeta_{0} - \mathcal{H}_{\alpha} s_1 + {E_1^\alpha} f_{\alpha 1} + {E_1^\theta} f_{\theta 1}  + {E_1^{\theta \alpha}} f_{\theta \alpha 1} + w_1
% \end{align}
% \begin{align}
%      \dot{\hat{\zeta}}_1 = & \mathcal{A}_\alpha \hat{\zeta}_1 + \mathcal{D}_{\theta \alpha} {\zeta}_{0} - \mathcal{H}_{\alpha} s_1 +\mathfrak{L}_\alpha (y_1-\hat{y}_1)
% \end{align}
First, we prove for $i =1$. We note here, for the leader vehicle i.e., $i = 0$, the full state is known to V2X DS from GNSS data (since  $u_0 = a_0$), and thus the error $e_0 = \zeta_0 - \hat{\zeta}_0 = 0 $. Therefore, from  \eqref{r} we obtain $\lVert r_0 \rVert^2_2 \leqslant \lambda_{\max[N^TN]} \lVert e_{0} \rVert_2^2 = 0$. Hence, for $i = 1$, applying Gronwall's inequality on \eqref{beta6}
% \begin{align}\label{vlong}
%     \Gamma_{\alpha 1}(t) \leqslant &\Gamma_{\alpha 1}(0) e^{-{\gamma_2}_1 t} + \frac{{\beta_2}_1}{{\gamma_2}_1}  \lVert f_{\alpha 1} \rVert^2_{\mathcal{L}_2} + \frac{{\beta_3}_1}{{\gamma_2}_1}  \lVert f_{\theta 1} \rVert^2_{\mathcal{L}_2}\nonumber \\ & + \frac{{\beta_4}_1}{{\gamma_2}_1}  \lVert f_{\theta \alpha 1} \rVert^2_{\mathcal{L}_2} + \frac{{\beta_5}_1}{{\gamma_2}_1}  \lVert w_1 \rVert^2_{\mathcal{L}_2}.
% \end{align}
\begin{align}\label{vlong}
    \Gamma_{\alpha 1}(t) \leqslant &\Gamma_{\alpha 1}(0) e^{-{\gamma_2}_1 t} + \frac{{\overline{\beta}_{F}}_1}{{\gamma_2}_1}  \lVert F_{ 1} \rVert^2_{\mathcal{L}_2} +  \frac{{\beta_5}_1}{{\gamma_2}_1}  \lVert w_1 \rVert^2_{\mathcal{L}_2}.
\end{align}
Then, using \eqref{v}, we can write $\lambda_{\min[{P_\alpha}]} \lVert r_1 (t) \rVert_2^2 \leqslant  \Gamma_{\alpha 1} (t)$ and $\Gamma_{\alpha 1}(0) \leqslant  \lambda_{\max[{P_\alpha}]}  \lVert r_1(0) \rVert_2^2$. Thus, we utilize the variables defined in Table \ref{tab:DSS_var} for $i =1$ such that we re-write \eqref{vlong}  as
% \begin{align}\label{r1_f}
%     \lVert r_1 (t) \rVert_2^2 \leqslant & {\gamma_P} \lVert r_1(0) \lVert_2^2 e^{-{\gamma_2}_1 t}  +{\gamma_{\alpha}}_1 \lVert f_{\alpha 1} \rVert^2_{\mathcal{L}_2} + {\gamma_{\theta}}_1 \lVert f_{\theta 1} \rVert^2_{\mathcal{L}_2} \nonumber \\ &+ {\gamma_{\alpha \theta}}_1  \lVert f_{\theta \alpha 1} \rVert^2_{\mathcal{L}_2} + {\gamma_{w}}_1 \lVert w_1 \rVert^2_{\mathcal{L}_2}.
% \end{align}
\begin{align}%\label{r1_f}
    \lVert r_1 (t) \rVert_2^2 \leqslant & {\gamma_P} \lVert r_1(0) \lVert_2^2 e^{-{\gamma_2}_1 t}  +{\gamma_{F}}_1 \lVert F_{ 1} \rVert^2_{\mathcal{L}_2} + {\gamma_{w}}_1 \lVert w_1 \rVert^2_{\mathcal{L}_2}. \nonumber
\end{align}
With this, we prove that detector 1 is DSS. Henceforth, and using \eqref{beta6} for $i = 2$ we obtain
% \begin{align}
%     \dot{\Gamma}_{\alpha_2} &\leqslant  - {\gamma_2}_2 \Gamma_{\alpha 2}   + {\beta_2}_2 \lVert f_{\alpha 2} \rVert_2^2 + {\beta_3}_2 \lVert f_{\theta 2} \rVert_2^2 + {\beta_4}_2 \lVert f_{\theta \alpha 2} \rVert_2^2\nonumber \\ & + {\beta_5}_2 \lVert w_2 \rVert_2^2  + {\beta_6}_2 \Big[ {\gamma_P} \lVert r_1(0) \lVert_2^2 e^{-{\gamma_2}_1 t} + {\gamma_{\alpha}}_1 \lVert f_{\alpha 1} \rVert^2_{\mathcal{L}_2} \nonumber \\ &+ {\gamma_{\theta}}_1 \lVert f_{\theta 1} \rVert^2_{\mathcal{L}_2}+ {\gamma_{\alpha \theta}}_1  \lVert f_{\theta \alpha 1} \rVert^2_{\mathcal{L}_2} + {\gamma_{w}}_1 \lVert w_1 \rVert^2_{\mathcal{L}_2}\Big] \label{beta6_2}.
% \end{align}
\begin{align}
    \dot{\Gamma}_{\alpha_2} &\leqslant  - {\gamma_2}_2 \Gamma_{\alpha 2}   + {\overline{\beta}_{F}}_2 \lVert F_{ 2} \rVert_2^2  + {\beta_5}_2 \lVert w_2 \rVert_2^2  + {\beta_6}_2 \nonumber \\ &\left[ {\gamma_P} \lVert r_1(0) \lVert_2^2 e^{-{\gamma_2}_1 t}  + {\gamma_{F}}_1 \lVert F_{1} \rVert^2_{\mathcal{L}_2} + {\gamma_{w}}_1 \lVert w_1 \rVert^2_{\mathcal{L}_2}\right] \label{beta6_2}.
\end{align}
Following similar steps after applying Gronwall's inequality and using  ${\gamma}_2$ from Table \ref{tab:DSS_var} we obtain
\begin{align}
    &\lVert r_2 (t) \rVert_2^2 \leqslant  {\gamma_P} \lVert r_2(0) \lVert_2^2 e^{-{\gamma_2}_2 t}  +  {\gamma}_2 \lVert r_1(0) \rVert^2_{\mathcal{L}_2} +  {\gamma_{F}}_2 \lVert F_{2} \rVert^2_{\mathcal{L}_2} \nonumber \\ & +    {\gamma_{w}}_2 \lVert w_2 \rVert^2_{\mathcal{L}_2} +   \frac{{\beta_6}_2}{{\gamma_2}_2\lambda_{\min[{P_\alpha}]}} \left[{\gamma_{F}}_1 \lVert F{1} \rVert^2_{\mathcal{L}_2}   +  {\gamma_{w}}_1 \lVert w_1 \rVert^2_{\mathcal{L}_2}\right]. \nonumber
\end{align}
Thus, for  $\mathfrak{P}_i$ from Table \ref{tab:DSS_var}, iteratively  it can be shown that 
 \begin{align}\label{r_dss}
      \lVert r_i (t) \rVert_2^2 &\leqslant  {\gamma_P} \lVert r_i(0) \lVert_2^2 e^{-{\gamma_2}_i t}  + \nonumber \\ &\mathfrak{P}_i \sum\limits_{l=1}^i \left[ \lVert r_{l-1}(0) \rVert^2_{\mathcal{L}_2} +   \lVert F_{l} \rVert^2_{\mathcal{L}_2}  +   \lVert w_l \rVert^2_{\mathcal{L}_2}\right]. 
 \end{align}
With this, we prove DSS criteria for $r_i$. Furthermore, this implies that $r_{i-1}$ and subsequently $e_{i-1}$ is also bounded under the presence of cyberattack and uncertainty since from \eqref{r} we can write $\lambda_{\min[N^TN]} \lVert e_{i-1} \rVert_2^2  \leqslant \lVert r_{i-1} \rVert_2^2$. If $r_i$ is greater than $e_{i-1}$ then $\lVert r_i \rVert_2^2 - \lVert e_{i-1} \rVert_2^2$ is positive and less than $r_i$. This implies that under this condition $r_{c, i}$ is less than equal to the right-hand-side (RHS) of \eqref{r_dss}. Moreover, if $r_i$ is less than $e_{i-1}$, $r_{c, i} =0$. However, the RHS of \eqref{r_dss} is always non-negative, which implies for the $r_i$ less than $e_{i-1}$ case, then $r_{c, i}$ is still less than equal to the RHS of \eqref{r_dss}. This yields
% Hence, we can deduce that $\lVert r_i (t) \rVert_2^2- \lVert e_{i-1} (t) \rVert_2^2$ will satisfy the DSS criteria as well. Additionally, since the right-hand-side (RHS) of  \eqref{r_dss}  is non-negative, each component of $r_{c, i}$ \eqref{rc} is individually less than the RHS.  Hence from \eqref{r_dss}, we can write 
 \begin{align}
      r_{c, i} &\leqslant  {\gamma_P} \lVert r_i(0) \lVert_2^2 e^{-{\gamma_2}_i t}  + \mathfrak{P}_i \sum\limits_{l=1}^i \Big[ \lVert r_{l-1}(0) \rVert^2_{\mathcal{L}_2} \nonumber \\ &+   \lVert f_{\alpha l} \rVert^2_{\mathcal{L}_2}  +  \lVert f_{\theta l} \rVert^2_{\mathcal{L}_2} +   \lVert f_{\theta \alpha l} \rVert^2_{\mathcal{L}_2} +  \lVert w_l \rVert^2_{\mathcal{L}_2}\Big]. 
 \end{align}
This ensures the DSS criteria for the proposed detector \eqref{chihat}.

    \textbf{Robustness:} The $i^{th}$ detector is considered robust towards uncertainties $(w_i \neq 0)$, if under no cyberattack ($f_{\alpha i}, f_{\theta i},  f_{\theta \alpha i} = 0)$ the V2X DS  residual $r_{c, i}(t)$ does not cross a predefined threshold value. Thus, the detector is robust if there exists RF, $\rho_{1_i} > 0$ such that $\forall i \in \{1,\cdots,n\}$
\begin{align}\label{rob_rc}
    \int_0^\infty r_{c, i} dt \leqslant \rho_{1_i} \int_0^\infty \lVert w_i \rVert^2_2 dt + \epsilon_{1}. 
\end{align}
Furthermore, the non-negative RHS of the inequality in \eqref{rob_rc} implies that proving $\lVert r_i \rVert_2^2 - \lVert e_{i-1} \rVert_2^2 $ is smaller than the RHS will prove each component  of $r_{c, i}$ \eqref{rc}  is individually less than the RHS. Therefore, it is sufficient to prove
\begin{align}\label{rob}
    \int_0^\infty \left[ \lVert r_i \rVert_2^2 - \lVert e_{i-1} \rVert_2^2 \right] dt \leqslant \rho_{1_i} \int_0^\infty \lVert w_i \rVert^2_2 dt + \epsilon_{1}, 
\end{align} to ensure robustness. First, let us define a vector $\chi_{1 i} = \begin{bmatrix}
        r_i^T & e_{i-1}^T & w_i^T
    \end{bmatrix}^T$. Now, the robustness criteria from \eqref{rob} can be re-written using the definition of $\mathcal{G}_1,$ from \mbox{Table \ref{tab:th1_var}} as $\int_0^\infty \chi_{1 i}^T  \mathcal{G}_1 \chi_{1 i} dt -\epsilon_{1} \leqslant 0$. Next, let us consider the same Lyapunov function from \eqref{v}. Since  $f_{\alpha i},  f_{\theta i}, f_{\theta \alpha i}= 0, w_i \neq 0$ for this criteria, the derivative of ${\Gamma}_{\alpha i}$ is modified to $\dot{\Gamma}_{\alpha i} \leqslant \chi_{1 i}^T \mathcal{B}_{1_{\alpha i}} \chi_{1 i}$ using \eqref{vdiss}, where $\mathcal{B}_{1_{\alpha i}},$ is defined in \mbox{Table \ref{tab:th1_var}}. Thus,
    % Then using this definition and from \eqref{rob} and \eqref{vdiss}-\eqref{vd1},\eqref{vdw} we can write  
    % \begin{align}
    %     \int_0^\infty \chi_{1 i} & \mathcal{G}_1 \chi_{1 i}^T dt -\epsilon \leqslant 0,\label{rproof} \\
    %      &\dot{\Gamma}_\alpha \leqslant \chi_{1 i}^T \mathcal{B}_{1_{\alpha i}} \chi_{1 i}, \label{vrb}
    % \end{align}
    % Note that using $\chi_{1 i}$, for $f_\theta, f_\alpha,  f_{\theta \alpha} = 0$ and $w_i \neq 0$, from \eqref{vd1} and \eqref{vdw} we can write
    % \begin{align}
    %     \dot{\Gamma}_\alpha \leqslant \chi_{1 i}^T \mathcal{B}_{1_{\alpha i}} \chi_{1 i} ,
    % \end{align} 
    % where $\mathcal{G}_1, \mathcal{B}_{1_{\alpha i}}$ is defined as \eqref{gq1}. Next, we
    integrating this inequality we  obtain
    \begin{align}\label{l0}
        \Gamma_{\alpha i}(\infty) - \Gamma_{\alpha i}(0) - \int_0^\infty \chi_{1 i}^T \mathcal{B}_{1_{\alpha i}} \chi_{1 i} dt \leqslant 0.
    \end{align}
    Since the left-hand-side (LHS) of the \eqref{l0} is non-positive, if $\int_0^\infty \chi_{1 i}^T  \mathcal{G}_1 \chi_{1 i}dt -\epsilon_{1}$ can be proved to be smaller than the LHS of the \eqref{l0}, then $\int_0^\infty \chi_{1 i}^T  \mathcal{G}_1 \chi_{1 i}dt -\epsilon_{1}$ will also be non-positive. Mathematically, it implies that we can prove
    \begin{align}
        \int_0^\infty \! \! \! \! \! \chi_{1 i}^T \mathcal{G}_1 \chi_{1 i} dt -\epsilon_{1} \leqslant \Gamma_{\alpha i}(\infty) - \Gamma_{\alpha i}(0) - \int_0^\infty \!\! \! \! \! \chi_{1 i}^T \mathcal{B}_{1_{\alpha i}} \chi_{1 i} dt, \nonumber
    \end{align} to prove \eqref{rob}. This condition can be further modified as
    \begin{align}\label{r2}
        \int_0^\infty \! \! \! \! \chi_{1 i}^T \left[\mathcal{G}_1+ \mathcal{B}_{1_{\alpha i}}\right] \chi_{1 i} dt \leqslant \Gamma_{\alpha i}(\infty) - \Gamma_{\alpha i}(0) + \epsilon_{1}.
    \end{align} Now, by definition $\Gamma_{\alpha i}(\infty) \geqslant 0$. Using \eqref{v}, we can write $\Gamma_{\alpha i}(0) = r_i^T (0) P_\alpha r_i (0) \leqslant \lambda_{\max[P_\alpha]} \max\limits_i \lVert r_i (0) \rVert_2^2$. Thus, we can deduce $\Gamma_{\alpha i}(0) - \epsilon_{1}\leqslant 0$ for $\epsilon_{1} \geqslant \lambda_{\max[P_\alpha]} \max\limits_i \lVert r_i (0) \rVert_2^2$. Therefore, we can write
    \begin{align}\label{v0g0}
        \Gamma_{\alpha i}(\infty) - \Gamma_{\alpha i}(0) + \epsilon_{1}\geqslant 0.
    \end{align} Hence, if we can guarantee 
    \begin{align}\label{r3}
        \int_0^\infty \! \! \! \! \chi_{1 i}^T \left[\mathcal{G}_1+\mathcal{B}_{1_{\alpha i}}\right] \chi_{1 i} dt \leqslant 0,
    \end{align} then it will be sufficient to prove \eqref{r2} using \eqref{v0g0}. \eqref{r3} thus provides us with the LMI criteria $\mathcal{G}_1 + \mathcal{B}_{1_{\alpha i}} \leqslant 0$ that proves \eqref{r2} which is equivalent to prove \eqref{rob}. Thus, we prove that  \eqref{rob_con} is a sufficient condition for guaranteeing robustness.

    \textbf{Sensitivity:}\, In the presence of cyberattacks and no uncertainties, a detector is considered sensitive towards cyberattack if the V2X DS residual $r_{c, i} (t)$ can cross a predefined threshold value regardless of the amplitude of the attack. Furthermore, from \eqref{rc}, we can write $r_{c, i} \geqslant \lVert r_i \rVert_2^2 - \lVert e_{i-1} \rVert_2^2$. Thus, for $f_{\alpha i}, f_{\theta i},  f_{\theta \alpha i} \neq 0$ and $w_i = 0$, the detector is sensitive if there exists SF, $\rho_{2_i} > 0$ and $\epsilon_{1}>0$, such that $\forall i \in \{1,\cdots,n\}$
% \begin{align}\label{sproof_rc}
%     \int_0^\infty\!\! \! \! \! r_{c, i} dt \geqslant \rho_{2_i} \int_0^\infty \! \! \! \! \left[ \lVert f_{\alpha i} \rVert^2_2+ \lVert f_{\theta i} \rVert^2_2 + \lVert f_{\theta \alpha i} \rVert^2_2\right] dt - \epsilon_{1}.
% \end{align}
% From \eqref{rc}, we can write $r_{c, i} \geqslant \lVert r_i \rVert_2^2 - \lVert e_{i-1} \rVert_2^2$. Hence, if we can prove $\lVert r_i \rVert_2^2 - \lVert e_{i-1} \rVert_2^2$ is greater than the RHS of \eqref{sproof_rc}, $r_{c, i}$ will also be greater than the RHS. Thus, to ensure sensitivity, we can prove
% \begin{align}\label{sproof}
%     &\int_0^\infty\!\! \! \! \! r_{c, i} dt \geqslant \int_0^\infty\!\! \! \! \! \left[\lVert r_i \rVert_2^2 - \lVert e_{i-1} \rVert_2^2 \right] dt \geqslant \nonumber \\ & \rho_{2_i} \int_0^\infty \! \! \! \! \left[ \lVert f_{\alpha i} \rVert^2_2+ \lVert f_{\theta i} \rVert^2_2 + \lVert f_{\theta \alpha i} \rVert^2_2\right] dt - \epsilon_{1}.
% \end{align}
\begin{align}\label{sproof}
    &\int_0^\infty\!\! \! \! \! r_{c, i} dt \geqslant \int_0^\infty\!\! \! \! \! \left[\lVert r_i \rVert_2^2 - \lVert e_{i-1} \rVert_2^2 \right] dt \geqslant \nonumber \\ & \rho_{2_i} \int_0^\infty \! \! \! \! \left[ \lVert f_{\alpha i} \rVert^2_2+ \lVert f_{\theta i} \rVert^2_2 + \lVert f_{\theta \alpha i} \rVert^2_2\right] dt - \epsilon_{1}.
\end{align}
Hence, we note that to prove the sensitivity condition it is adequate to prove the second inequality in \eqref{sproof}. 
Now, for design flexibility we substitute  $\rho_{2_i} = \frac{\rho_{4_i}}{\rho_{3_i}}$ in \eqref{sproof}  to get % $ ($\rho_{4_i},\rho_{3_i}>0$) to obtain
    \begin{align}\label{sk}
        \rho_{3_i}& \int_0^\infty \! \! \! \! \left[ \lVert r_i \rVert_2^2 - \lVert e_{i-1} \rVert_2^2 \right] dt \geqslant  \rho_{4_i} \int_0^\infty \! \! \! \! \!  \lVert F_{i} \rVert^2_2 dt - \epsilon_{2},
    \end{align} where $\epsilon_{2} = \rho_{3_i} \epsilon_{1}$, and $\lVert F_i \rVert_2^2$ is defined in Table \ref{tab:DSS_var}.
Now to prove this, let us define another vector $\chi_{2 i} = \begin{bmatrix}
        r_i^T & e_{i-1}^T & f_{\alpha i}^T & f_{\theta i}^T & f_{\theta \alpha i}^T
    \end{bmatrix}^T$.  Next, using the definition of $\mathcal{G}_2$ from \mbox{Table \ref{tab:th1_var}}, \eqref{sk} can be re-written as:
    \begin{align}
        \int_0^\infty \chi_{2 i}^T \mathcal{G}_2 \chi_{2 i} dt + \epsilon_{2}\geqslant 0.\label{s1}
    \end{align}  Then considering the same Lyapunov functions defined in \eqref{v} under  $f_{\alpha i}, f_{\theta i},  f_{\theta \alpha i} \neq 0$ and $w_i = 0$, the   derivative of ${\Gamma}_{\alpha i}$ from \eqref{vdiss} can be modified as $ \dot{\Gamma}_{\alpha i} \leqslant  \chi_{2 i}^T \mathcal{B}_{2_{\alpha i}} \chi_{2 i} $, where $\mathcal{B}_{2_{\alpha i}}$ is defined in \mbox{Table \ref{tab:th1_var}}. 
    % \begin{align}
    %     \int_0^\infty \chi_{2 i}^T \mathcal{G}_2 \chi_{2 i} dt - \epsilon_{1}\geqslant 0,\label{s1} \\
    %     \dot{\Gamma}_\alpha \leqslant \int_0^\infty \chi_{2 i}^T \mathcal{B}_{2_{\alpha i}} \chi_{2 i} dt, \label{s2}
    % \end{align}where $\mathcal{G}_2, \mathcal{B}_{2_{\alpha i}}$ is defined in \eqref{g2s}-\eqref{q2}.
    We note here that to obtain this inequality we have utilized the fact, for ${\gamma_1}_i>0$, ${\gamma_1}_i r_i^T r_i \geqslant -{\gamma_1}_i r_i^T r_i$. Next, we integrate the inequality to get
    \begin{align}\label{s3}
        \int_0^\infty\!\! \! \! \! \chi_{2 i}^T \mathcal{B}_{2_{\alpha i}} \chi_{2 i} dt - \Gamma_{\alpha i}(\infty) + \Gamma_{\alpha i}(0) \geqslant 0.
    \end{align} Since LHS of \eqref{s3} is positive,  if  $\int_0^\infty \! \! \chi_{2 i}^T \mathcal{G}_2 \chi_{2 i} dt + \epsilon_{2}$ can be proved to be larger than the LHS of \eqref{s3}, then $\int_0^\infty \! \! \chi_{2 i}^T \mathcal{G}_2 \chi_{2 i} dt + \epsilon_{2}$ will also be positive. This implies that we must prove 
    % Now, this LMI \eqref{s3} consequently proves \eqref{s1} if we can prove 
    \begin{align}
       \! \! \! \! \int_0^\infty\!\! \! \! \! \chi_{2 i}^T \mathcal{G}_2 \chi_{2 i} dt + \epsilon_{2}\geqslant \! \! \!\int_0^\infty \!\! \! \! \!\chi_{2 i}^T \mathcal{B}_{2_{\alpha i}} \chi_{2 i} dt - \Gamma_{\alpha i}(\infty) + \Gamma_{\alpha i}(0), \nonumber
    \end{align} to prove \eqref{s1}. This inequality condition can be rearranged as
    \begin{align}\label{s4}
        \int_0^\infty \!\! \! \! \!\chi_{2 i}^T \left[ \mathcal{G}_2 - \mathcal{B}_{2_{\alpha i}} \right] \chi_{2 i} dt \geqslant \Gamma_{\alpha i}(0) - \epsilon_{2} - \Gamma_{\alpha i}(\infty).
    \end{align} Furthermore, from \eqref{v0g0}, we can write  $\Gamma_{\alpha i}(0) - \epsilon_{2}- \Gamma_{\alpha i}(\infty) \leqslant 0$ for $\epsilon_{2}  \geqslant \lambda_{\max[P_\alpha]} \max\limits_i \lVert r_i (0) \rVert_2^2$. Consequently, we can deduce that when $\int_0^\infty \! \! \! \chi_{2 i}^T \left[ \mathcal{G}_2 - \mathcal{B}_{2_{\alpha i}} \right] \chi_{2 i} dt$ is positive, then the inequality \eqref{s4} will be satisfied. Thus, if we can guarantee  
    \begin{align}\label{s5}
        \int_0^\infty \chi_{2 i}^T \left[ \mathcal{G}_2 - \mathcal{B}_{2_{\alpha i}} \right] \chi_{2 i} dt \geqslant 0,
    \end{align} it will  prove \eqref{s4}  which in turn proves \eqref{sk}. Therefore, \eqref{s5} implies that the LMI condition proposed in \eqref{sen_con}, \mbox{i.\, e.} $\left[ \mathcal{G}_2 - \mathcal{B}_{2_{\alpha i}} \right] \geqslant 0$  is a sufficient condition to ensure the sensitivity of the proposed V2X DS. 
   
    We note here that Sylvester's criterion for $\left[ \mathcal{G}_2 - \mathcal{B}_{2_{\alpha i}} \right] \geqslant 0$  requires $\mathfrak{L}_\alpha$ to be chosen such that  ${\gamma_1}_i = \rho_{3_i}$. Now, if we had not introduced $\rho_{3_i}, \rho_{4_i}$ in \eqref{sk}, it would require  ${\gamma_1}_i= 1$ which significantly constrains the design freedom.  Therefore, to attain more design flexibility, we introduced parameters $\rho_{3_i}$, $\rho_{4_i}$ in place of $\rho_{2_i}$ using \mbox{Table \ref{tab:th1_var}}.
    %$\rho_{3_i}$, $\rho_{3_i}$ as SF.

    \textbf{Optimization:} So far, we proved that satisfying the criteria \eqref{iss_con}-\eqref{sen_con} ensures  DSS, robustness, and sensitivity of the proposed detector. Next, to achieve  an optimal trade-off between robustness and sensitivity, we utilize the constrained multi-objective optimization problem established in \eqref{opt} that optimally minimizes  $\rho_{1_i}$ while maximizing $\rho_{2_i}$  over $\mathfrak{L}_\alpha$. The choice of weight vector $\mu_i$  depends on system-specific robustness and sensitivity requirements. The optimizer will be more biased towards sensitivity than robustness for $\mu_i < 0.5$ and vice versa.   The constraints of the optimization problem, i.\,e., the LMIs \eqref{iss_con}, \eqref{rob_con}, and \eqref{sen_con} are respectively the sufficient conditions for DSS, robustness, and sensitivity of the detection scheme. Therefore, the solution of the optimization problem established in \eqref{opt} gives the gain matrices that ensure the benchmark performance of the proposed V2X DS. \hfill  $\blacksquare$

%%%%%%%%%%%%%%%%%%%%%%%%%%%%%%%%%%%%%%%%%%%%%%%%%%%%%%%%%%%%%%%%%%%%%%%%%%%%%%%%

\bibstyle{arxiv}
\bibliography{reference.bib}

\end{document}